\documentclass{article}
\usepackage{graphicx} 
\usepackage{amsfonts, amsmath, amssymb, amsthm}
\usepackage{xcolor}
\usepackage{algorithm2e}
\usepackage{bbm}
\usepackage{float}
\usepackage{comment}
\usepackage{subcaption} 
\usepackage{enumerate}
\usepackage{natbib}
\usepackage{enumitem}

\usepackage[letterpaper, margin=1in]{geometry}

\newcommand{\barX}{\overline{X}}
\newcommand{\barx}{\overline{x}}

\newcommand{\TV}{\textrm{TV}}
\newcommand{\KL}{\textrm{KL}}

\DeclareMathOperator{\E}{\mathbb{E}}
\DeclareMathOperator{\Var}{\textrm{Var}}

\newcommand{\ds}{\displaystyle}

\newtheorem{theorem}{Theorem}
\newtheorem{corollary}{Corollary}
\newtheorem{proposition}{Proposition}

\newtheorem{remark}{Remark}

\title{A New Perspective on Reverse Diffusion for Monte Carlo Sampling}

\date{}

\author{
Jairon H. N. Batista \\
FGV EMAp \\
Rio de Janeiro, Brazil \\
\texttt{jairon.batista@fgv.edu.br} \\
\and
Flávio B. Gonçalves \\
UFMG and FGV EMAp \\
Rio de Janeiro, Brazil \\
\texttt{fbgoncalves@ufmg.br} \\
\and
Yuri F. Saporito \\
FGV EMAp \\
Rio de Janeiro, Brazil \\
\texttt{yuri.saporito@fgv.br} \\
\and
Rodrigo S. Targino \\
FGV EMAp \\
Rio de Janeiro, Brazil \\
\texttt{rodrigo.targino@fgv.br}
}

\begin{document}

\RestyleAlgo{ruled}

\maketitle

\begin{abstract}

This paper introduces a novel perspective on the use of reverse diffusion processes for sampling from unnormalized densities. The central idea is to embed the target density as the marginal at the initial time of a suitably constructed diffusion process evolving over a finite horizon. In contrast to existing approaches, the proposed methodology involves neither time discretization error nor score function estimation, so that Monte Carlo variability is the only source of approximation.
A key theoretical result characterizes the Radon-Nikodym derivative of the reverse diffusion transition distribution with respect to that of an Ornstein-Uhlenbeck (OU) process. This representation provides a tractable change-of-measure formulation and serves as the foundation for two distinct classes of Monte Carlo algorithms.
The first class approximates the reverse transition distribution via a sequence of pseudo-marginal Metropolis-Hastings MCMC algorithms. The resulting scheme produces an approximate i.i.d. sample from the target distribution and is fully parallelizable, as trajectories can be generated independently.
The second class consists of MCMC algorithms targeting the joint law of the whole diffusion path in $[0,T]$, for a suitably chosen horizon $T$. The proposed samplers combine three types of updates. One update simulates the diffusion forward in time according to an OU dynamics, conditional on its initial value. The remaining two update the backward component via Metropolis-type steps: one conditions on the terminal value at time $T$ and the other one does not. In both cases, acceptance probabilities are implemented using Barker-type Bernoulli factory constructions.
The proposed methods perform particularly well in the presence of multimodality and remain effective for targets with complex dependence structures, providing a scalable and efficient alternative to the widely used random-walk Metropolis algorithm.

\end{abstract}


\section{Introduction}

Monte Carlo methods have become indispensable tools in modern Bayesian statistics, providing practical means to approximate complex integrals and to explore multi-dimensional posterior distributions that resist analytical treatment. They bridge rigorous theory and applied computation across a wide range of disciplines, including genetics, neuroscience, image reconstruction, spatial analysis, and machine learning \cite{liu2001monte, robert2004monte}. Over the past decades, this has motivated the development of a broad family of simulation-based methods, with particular emphasis on Markov chain Monte Carlo (MCMC) \citep{gamerman2006markov, brooks2011handbook} and Sequential Monte Carlo (SMC) \citep{smcBook, chopin2020introduction} algorithms. While these methodologies have considerably expanded the scope of feasible inference, they still face substantial challenges when the target distribution is highly multimodal or exhibits strong posterior dependence among components -- features that frequently arise in realistic
models encountered in contemporary applications.

Standard Markov chain Monte Carlo algorithms such as Gibbs sampling and random-walk Metropolis-Hastings are well known to perform poorly when the target distribution is multimodal or exhibits strong posterior dependence (ex. highly correlated coordinates). In multimodal settings, local proposals make it difficult for the chain to move between distant modes, leading to metastability and extremely slow mixing times. Likewise, when coordinates are strongly correlated, component-wise Gibbs updates and isotropic random-walk proposals tend to explore the posterior along “wrong” directions, resulting in highly persistent autocorrelations and large effective sample size penalties. To address these issues, the literature has proposed a range of more sophisticated algorithms. Tempering-based methods, such as parallel tempering and simulated tempering, introduce auxiliary temperatures to flatten the target and encourage mode switching, at the cost of running multiple chains and tuning temperature ladders \citep{swendsen1986replica, Geyer1991, neal1996sampling}. Closely related ideas arise in annealing and bridging approaches, including annealed importance sampling and bridge sampling \citep{neal2001annealed, gelman1998simulating}. Population-based Monte Carlo methods provide another strategy by evolving a collection of particles through a sequence of intermediate distributions \citep{cappe2004population}. Sequential Monte Carlo methods offer a natural framework for implementing such schemes, allowing particles to propagate through tempered distributions while incorporating resampling and MCMC mutation steps \citep{del2006sequential}.

Another important class of approaches relies on gradient information to construct more informed proposals. Langevin-based samplers incorporate gradient drift terms to guide proposals toward regions of higher posterior density \citep{roberts1998optimal}, while Hamiltonian Monte Carlo (HMC) generates distant proposals by simulating Hamiltonian dynamics in an augmented phase space \citep{neal2011mcmc}. Subsequent developments include Riemannian manifold variants that adapt to local curvature \citep{girolami2011riemann} and automated tuning strategies such as the No-U-Turn Sampler (NUTS), which dynamically selects integration lengths to avoid inefficient trajectories \citep{hoffman2014no}. These gradient-based methods can dramatically improve exploration in strongly correlated posteriors, but they may still struggle in the presence of well-separated modes and can become computationally expensive in high-dimensional settings due to repeated gradient evaluations and the need to tune numerical integration parameters.

Another line of work investigates non-reversible Monte Carlo methods, which aim at improving sampling efficiency by breaking detailed balance and generating persistent trajectories. In particular, piecewise deterministic Markov processes (PDMPs) \citep{pdmp}, such as the zig-zag process \citep{bierkens2019zig} and the bouncy particle sampler \citep{bouchard2018bouncy} produce continuous-time dynamics with deterministic flows interrupted by random events. These methods can substantially reduce diffusive behavior and improve mixing in some settings, although they typically require access to gradient information and can be challenging to implement for complex models.

More recently, a different line of work has emerged based on denoising diffusion and score-based generative models, in which a forward diffusion process gradually corrupts data with noise, and a learned reverse-time dynamics (often parameterized by neural networks) is used to generate approximate samples from high-dimensional distributions. These methods have achieved remarkable success in producing photorealistic images and other complex data objects in very high dimensions \citep{sohl2015deep, ho2020denoising, song2021score}, and there is growing interest in using them as generic samplers for intractable distributions beyond computer vision. In most of these constructions, the reverse diffusion or score function is only available through an approximation, typically learned from data via deep neural networks, so the resulting algorithms are effectively black-box and their theoretical sampling properties can be difficult to analyze \citep{song2019generative, song2021score}. Moreover, practical implementations invariably rely on time-discretized versions of the reverse stochastic differential equation or its ODE counterpart, which introduce systematic bias unless the discretization is taken to an idealized limit; in addition, learning and evaluating the score network can be computationally demanding. Thus, while diffusion-based samplers offer powerful new mechanisms for exploring complex, high-dimensional distributions, they are almost always approximate, often biased, and not straightforward to adapt when one seeks fully controlled sampling from a prescribed unnormalized target density.

This paper proposes a new framework for sampling from an unnormalized density, such as a Bayesian posterior distribution, through the simulation of reverse diffusion processes. In contrast to diffusion-based generative samplers, the methods developed here do not rely on neural-network approximations of score functions or on time-discretized solutions of stochastic differential equations. As a consequence, the procedures introduced in this work involve only Monte Carlo error, with no additional bias arising from numerical approximations of the underlying diffusions. This shifts the perspective on reverse diffusions from approximate generative modeling to a principled mechanism for exact inference from a prescribed target density.

Our starting point is a diffusion process with the initial distribution given by the target density and whose forward dynamics follows an Ornstein-Uhlenbeck (OU) process converging to a standard Gaussian distribution of matching dimension. The proposed methodology builds on a change-of-measure identity expressing the Radon-Nikodym derivative of the reverse-time dynamics with respect to a non-reverting OU process. Although this representation is exact, its direct use in a rejection sampler is infeasible: the supremum of the Radon-Nikodym derivative is unavailable, and the only attainable upper bound is loose and deteriorates exponentially with time.

To address this, we construct a collection of Monte Carlo building blocks to approximately sample (in Monte Carlo sense) from the reverse transition and marginal distributions of the diffusion whose initial distribution is the target one. Those blocks include sampling-importance-resampling (SIR) and pseudo-marginal Metropolis-Hastings MCMC algorithms.
These blocks are then used to propose two global algorithms targeting the desired distribution. One of them considers independent replications of a sequence of MCMC algorithms sequentially targeting the reverse transition (and optionally the marginal) distribution(s) of the diffusion process, resulting in a Monte Carlo independent sample of the target distribution. The second one involves only one MCMC algorithm targeting the whole diffusion path in a time interval, resulting in an MCMC correlated sample from the target distribution. The second algorithm has higher computational cost but involves the MC error of only one MCMC chain.

Other techniques that use the reverse SDE framework in this context are described in \cite{DiffusionPINN} and  \cite{ReverDiffusionMonteCarlo}. Both works use an Euler–Maruyama time discretization to approximately sample from reverse diffusion, relying on approximations of the score function. In \cite{DiffusionPINN}, the score is approximated by solving a transformed version of the Fokker–Planck PDE using a Physics-Informed Neural Network (PINN), where the initial condition is given by a transformation of $\pi_u$. In contrast, \cite{ReverDiffusionMonteCarlo} estimates the score via a Monte Carlo approach based on Langevin dynamics or importance sampling.
These methods rely on simulating the reverse SDE, which poses a significant challenge due to its complexity. Even in cases where we might have an good approximation of the score function, its behavior is initially unknown, and so we cannot guarantee the stability of numerical discretization methods used for the simulation of the reverse SDE.


\subsection*{Overview of the paper}

The goals and contributions of the paper are the following:
\begin{itemize}
\item introduce a new perspective on the denoising-diffusion approach for sampling from arbitrary unnormalized densities by bridging it with classical SMC and MCMC ideas;

\item propose two classes of algorithms --SPARK and Ping-Pong MCMC-- that operate under this perspective and involve no approximation bias;

\item show how the algorithms arise from two key analytical results for reverse diffusions --a change-of-measure identity for the reverse-time law and an integral identity that enables unbiased estimation of the corresponding transition and marginal densities-- yielding a unified framework for constructing samplers based on reverse stochastic dynamics;

\item demonstrate that the algorithms are \emph{zeroth-order} (requiring no derivative evaluations) and are particularly effective for sampling from \emph{multimodal distributions} and \emph{complex dependence structures};

\item validate the efficiency and robustness of the proposed methods through computational experiments on several benchmark examples.
\end{itemize}

The remainder of the paper is organized as follows. Section~2 presents the theoretical background on reverse diffusions that underpins the proposed methodology. Sections~3 and~4 introduce the two classes of algorithms -- SPARK and Ping-Pong MCMC, respectively, and discusses their main computational aspects related to efficiency, scaling, and practical implementation. Finally, section~5 simulated examples designed to investigate the empirical performance of both algorithms in challenging scenarios.

\section{Theoretical foundation}\label{sec.:Theory}

\subsection{Reverse diffusions}

This section presents the background definitions and results required to develop our methodology. We focus first on the forward diffusion that gradually injects Gaussian noise into an arbitrary initial distribution~$p_0$ and then review the corresponding reverse-time dynamics that play a central role in our algorithms.


Let $(X_s)_{s\ge0}$ solve the $d$-dimensional SDE  
\begin{equation} \label{eq:forward}
dX_s = -\tfrac{1}{2}X_s\, ds + dB_s, 
\qquad X_0 \sim p_0,
\end{equation}
where $(B_s)_{s \ge 0}$ is a $d$-dimensional Brownian motion independent of $X_0$.  
Then $X_s \stackrel{d}{\longrightarrow} \mathcal{N}(0,I_d)$ as $s\to\infty$ (add a reference here).

When $p_0$ is bounded, positive, and $C^2$, \cite{Haussmann1986} showed that the reverse-time process $\barX_t := X_{T-t}$ is itself a diffusion with dynamics
\begin{align} \label{eq:reverse}
d\barX_s 
= \left(\tfrac{1}{2}\barX_s + \nabla \log p(T-s,\barX_s) \right) ds 
+ d\overline{B}_s,
\end{align}
for $\barX_0 = X_T$, where $p(t,\cdot)$ is the density of $X_t$ and $(\overline{B}_s)_{s\ge0}$ is an independent Brownian motion.

We analyze the convergence of the proposed algorithms in terms of both the total variation (TV) distance and the Kullback-Leibler (KL) divergence. Throughout, we treat densities and their associated probability measures interchangeably when evaluating these discrepancies, whenever no ambiguity arises.

The next result quantifies the approximation error of the distribution and moments of $X_t$ to a standard normal.

\begin{proposition}\label{prop_TV}
Let $p_t$ be the distribution of $X_t$ under the SDE in \eqref{eq:forward} and $\gamma_d:=\mathcal{N}(0,I_d)$. Then,
\begin{enumerate}[label=(\roman*)]
\item $\ds \TV(p_T, \gamma_d) \leq \sqrt{\frac{1}{2} \KL(p_T \mid \gamma_d)} \leq e^{-T/2} \sqrt{\frac{1}{2} \KL( p_0 \mid \gamma_d )}$

\item $\ds \mathbb{E}[X_t] = e^{-t/2}\,\mathbb{E}_{p_0}[X]$, \;
$\operatorname{Var}(X_{t,i}) = 1 + e^{-t}\left(\operatorname{Var}_{p_0}(X_i)-1\right)$, \;
$\operatorname{Cov}(X_{t,i},X_{t,j}) = e^{-t} \operatorname{Cov}_{p_0}(X_i,X_j)$, 
 $i\neq j$, provided that the corresponding moments under $p_0$ are finite.
\end{enumerate}
\end{proposition}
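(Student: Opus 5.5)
The plan rests on the explicit solution of the OU equation \eqref{eq:forward}. By variation of constants,
\[
X_t = e^{-t/2}X_0 + \int_0^t e^{-(t-s)/2}\,dB_s ,
\]
so $X_t \stackrel{d}{=} e^{-t/2}X_0 + \sqrt{1-e^{-t}}\,Z$, where $Z\sim\gamma_d$ is independent of $X_0$. The variance of the noise is $\int_0^t e^{-(t-s)}ds = 1-e^{-t}$.

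\textbf{Part (ii).} This follows directly from the representation. Taking expectations gives $\E[X_t]=e^{-t/2}\E_{p_0}[X]$. For the covariances, independence of $X_0$ and $Z$ gives
\[
\operatorname{Cov}(X_{t,i},X_{t,j}) = e^{-t}\operatorname{Cov}_{p_0}(X_i,X_j) + (1-e^{-t})\delta_{ij}.
\]
For $i=j$ this is $1+e^{-t}(\Var_{p_0}(X_i)-1)$, and for $i\neq j$ it is $e^{-t}\operatorname{Cov}_{p_0}(X_i,X_j)$. The only requirement is that the relevant first and second moments under $p_0$ are finite.

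\textbf{Part (i), first inequality.} This is Pinsker's inequality, $\TV\le\sqrt{\KL/2}$, applied with $p_T$ and $\gamma_d$.

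\textbf{Part (i), second inequality.} We need the exponential contraction $\KL(p_T\mid\gamma_d)\le e^{-T}\KL(p_0\mid\gamma_d)$. Taking square roots and multiplying by $\sqrt{1/2}$ then gives the stated bound.

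I would obtain the contraction from the log-Sobolev inequality for $\gamma_d$, which has constant $1$, i.e. $\KL(\mu\mid\gamma_d)\le \tfrac12 I(\mu\mid\gamma_d)$ with $I$ the relative Fisher information. The generator of \eqref{eq:forward} is $L=\tfrac12(\Delta - x\cdot\nabla)$, which is half the standard OU generator. Writing $h_t=p_t/\gamma_d$, de Bruijn's identity then reads
\[
\frac{d}{dt}\KL(p_t\mid\gamma_d) = -\tfrac12\int |\nabla\log h_t|^2\,dp_t = -\tfrac12 I(p_t\mid\gamma_d).
\]
Combining with LSI gives $\frac{d}{dt}\KL\le -\KL$, and Grönwall yields $\KL(p_T\mid\gamma_d)\le e^{-T}\KL(p_0\mid\gamma_d)$. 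The rate $e^{-T}$ is exactly what produces $e^{-T/2}$ after the square root.

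\textbf{Alternative for the contraction.} A cleaner route avoids regularity issues for $p_t$. Use the representation $X_T = aX_0+\sqrt{1-a^2}Z$ with $a=e^{-T/2}$, together with the fact that $\gamma_d$ is invariant under the same map. I would then invoke convexity of KL under the Gaussian channel, or the strong data processing / hypercontractive bound for the Mehler kernel. I would still present the de Bruijn–LSI argument as the main route.

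\textbf{Main obstacle.} The delicate step is justifying the differentiation of $t\mapsto\KL(p_t\mid\gamma_d)$ and the integration by parts for general $p_0$ with $\KL(p_0\mid\gamma_d)<\infty$ (the bound is trivial otherwise). For $t>0$, $p_t$ is smooth and positive because it is a Gaussian convolution, so the identity holds on $(0,T]$. Continuity as $t\downarrow0$ follows from lower semicontinuity of KL together with the data-processing bound $\KL(p_t\mid\gamma_d)\le\KL(p_0\mid\gamma_d)$, since $\gamma_d$ is invariant for the semigroup. I would therefore apply Grönwall on $[\epsilon,T]$ and let $\epsilon\to0$.
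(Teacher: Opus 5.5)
Your proposal is correct and follows essentially the same route as the paper. Part (i) uses Pinsker's inequality, then the Gaussian log-Sobolev inequality turned into the entropy decay $\mathrm{KL}(p_T\mid\gamma_d)\le e^{-T}\,\mathrm{KL}(p_0\mid\gamma_d)$; part (ii) uses the explicit OU solution $X_t\mid X_0=x_0\sim\mathcal N(e^{-t/2}x_0,(1-e^{-t})I_d)$. The only difference is that the paper cites Theorem 5.2.1 of Bakry--Gentil--Ledoux for the step from LSI to entropy decay, whereas you re-derive it via de Bruijn's identity and Gr\"onwall, correctly accounting for the factor $\tfrac12$ in the generator (rate $e^{-T}$ rather than $e^{-2T}$).
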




\subsection{Main theoretical results}

The methodology developed in this paper is built upon two key theoretical results.
The first result is a change-of-measure formula giving the Radon-Nikodym derivative between the law of the reverse diffusion in~\eqref{eq:reverse} and that of a non-reverting OU process.
It is obtained as a corollary of a more general theorem that establishes the analogous Radon-Nikodym derivative for the reverse diffusion associated with a broad class of forward diffusions.
The second result is an integral identity yielding an unbiased estimator of the marginal density $p(t,x)$ of the forward diffusion with initial distribution $p_0$ and OU dynamics.

\subsubsection{Girsanov change of measure for the reverse diffusion}


\begin{theorem} \label{th:girsanovgeral}
Let $p_0$ be a $C^2$ density on $\mathbb{R}^d$, and let 
$f:\mathbb{R}^d\times[0,T]\to\mathbb{R}^d$ 
and 
$g:[0,T]\to\mathbb{R}_{>0}$ 
satisfy the regularity assumptions guaranteeing a unique weak solution of
\[
dX_s = f(s,X_s)ds + g(s)dB_s, 
\qquad X_0\sim p_0,
\]
with marginal density $p(t,\cdot)$ of $X_t$, which is positive and $C^2$, solving the Fokker-Planck equation
\[
\partial_t p(t,x) 
= -\nabla\cdot[f(t,x)p(t,x)] 
+ \tfrac{1}{2}\Delta p(t,x).
\]
Let $C := C([0,T],\mathbb{R})$ be the space of continuous paths, with cylinder $\sigma$-algebra~$\mathcal{C}$.  
For $0<s_0<s_1<T$, let $\bar P$ and $\bar Q$ be respectively the path measures on $(C,\mathcal{C})$ (restricted to $[s_0,s_1]$) induced by the solutions of
\begin{align*}
d\barX_s 
&= \left(-f(T-s,\barX_s) + g(T-s)^2\,\nabla\log p(T-s,\barX_s)\right)ds 
+ g(T-s)d\overline{B}_s, \\
d\overline{Y}_s 
&= -f(T-s,\overline{Y}_s)ds + g(T-s)d\overline{B}_s,
\end{align*}
with common initial condition $\barX_{s_0}=\overline{Y}_{s_0}=\barx_0$.  
Then,
\[
\frac{d\bar P}{d\bar Q}(\omega)
= \exp\!\left\{-\int_{s_0}^{s_1} \nabla\cdot f(T-s,\omega_s)\,ds \right\}
\frac{p(T-s_1,\omega_{s_1})}{p(T-s_0,\barx_0)},
\]
provided the right-hand side is a true $\bar Q$-martingale, which is guaranteed when $\nabla\cdot f$ and $p$ are bounded.
\end{theorem}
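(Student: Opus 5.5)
The plan is to combine Girsanov's theorem with Itô's formula applied to $\log p$ along the reference path, and then use the Fokker--Planck equation to eliminate every term except the divergence of $f$ and the boundary values of $\log p$.

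\textbf{Girsanov step.} Work on $[s_0,s_1]$ under $\bar Q$. Write $\overline{Y}$ as $d\overline{Y}_s=-f(T-s,\overline{Y}_s)ds+g(T-s)dW_s$, with $W$ a $\bar Q$-Brownian motion. The two SDEs share the diffusion coefficient $g(T-s)$. Their drifts differ by $g(T-s)^2\nabla\log p(T-s,\cdot)=g(T-s)\,u_s$, where $u_s:=g(T-s)\nabla\log p(T-s,\overline{Y}_s)$. Girsanov's theorem then gives
\[
\frac{d\bar P}{d\bar Q}=\exp\Big\{\int_{s_0}^{s_1} u_s\cdot dW_s-\tfrac12\int_{s_0}^{s_1}|u_s|^2ds\Big\},
\]
provided this exponential is a true $\bar Q$-martingale. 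That is exactly the proviso in the statement, so I take it as given. Under the boundedness hypotheses one can also argue it directly: the right-hand side of the claimed formula is then bounded on $[s_0,s_1]$, so a local martingale of that form is a true martingale. Because $s_1<T$, we only ever evaluate $p(t,\cdot)$ at times $t\ge T-s_1>0$, where $p$ is positive and $C^2$. Hence $\log p$ is well defined and smooth enough for Itô's formula.

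\textbf{Removing the stochastic integral.} Set $h(s,y):=\log p(T-s,y)$. Itô's formula gives
\[
dh(s,\overline{Y}_s)=\big[\partial_s h-f\cdot\nabla h+\tfrac12 g^2\Delta h\big]ds+g\nabla h\cdot dW_s ,
\]
with everything evaluated at $(s,\overline{Y}_s)$ and $f,g$ at time $T-s$. Since $u_s\cdot dW_s=g\nabla h\cdot dW_s$, I solve for the stochastic integral and substitute it into the Girsanov exponent. The exponent becomes
\[
h(s_1,\overline{Y}_{s_1})-h(s_0,\barx_0)-\int_{s_0}^{s_1}\Big[\partial_s h-f\cdot\nabla h+\tfrac12 g^2\big(\Delta h+|\nabla h|^2\big)\Big]ds .
\]

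\textbf{Fokker--Planck identity.} Rewrite the Fokker--Planck equation for $\log p$, with diffusion coefficient $g^2/2$ (the statement's $\tfrac12\Delta$ corresponds to $g\equiv1$; the argument is identical). Dividing by $p$ gives
\[
\partial_t\log p=-\nabla\cdot f-f\cdot\nabla\log p+\tfrac12 g^2\big(\Delta\log p+|\nabla\log p|^2\big).
\]
Since $\partial_s h=-\partial_t\log p$ at $t=T-s$, the bracket in the exponent collapses to $\nabla\cdot f(T-s,\overline{Y}_s)$. Exponentiating then yields the claimed formula,
\[
\frac{d\bar P}{d\bar Q}(\omega)=\exp\Big\{-\int_{s_0}^{s_1}\nabla\cdot f(T-s,\omega_s)\,ds\Big\}\frac{p(T-s_1,\omega_{s_1})}{p(T-s_0,\barx_0)} .
\]

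\textbf{Main obstacle.} The algebra is routine. The delicate part is rigor: checking the true-martingale and Novikov-type condition, and making sure the stochastic integral and Itô's formula are legitimate for $\log p$. The latter needs positivity and $C^2$ regularity of $p$ on $[T-s_1,T-s_0]\times\mathbb{R}^d$, which is why the interval must stay strictly inside $(0,T)$. My first approach is the boundedness route above. Failing that, I would localize with stopping times $\tau_n=\inf\{s:|\overline{Y}_s|\ge n\}$, prove the identity up to $\tau_n$, and pass to the limit using uniform integrability of the bounded right-hand side.
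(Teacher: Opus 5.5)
Your proposal is correct and is essentially the paper's proof: apply Itô's formula to $\log p(T-s,\cdot)$ along the reference process $\overline{Y}$, use the Fokker--Planck equation rewritten for $\log p$ to collapse the drift bracket to $\nabla\cdot f(T-s,\cdot)$, and identify the result with the Girsanov exponential. The paper does the Itô computation first and invokes Girsanov at the end, and its proof also uses the coefficient $g^2/2$ in the Fokker--Planck equation, as you do, despite the $\tfrac12\Delta p$ written in the statement.
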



\begin{corollary}\label{Corl1}
Suppose that $p_0$ is bounded and $C^2$ and take $f(s,x)=-\tfrac{1}{2}x$ and $g(s)=1$. Then 
$\overline{Y}_{s+t}\mid \overline{Y}_s=x\sim \mathcal{N}(e^{t/2}x,(e^t-1)I_d)$ 
and
\begin{equation}\label{eq:RNrev_OU}
\frac{d\bar P}{d\bar Q}(\omega)
= \exp\!\left\{\tfrac{d}{2}(s_1-s_0)\right\}
\frac{p(T-s_1,\omega_{s_1})}{p(T-s_0,\barx_0)}.
\end{equation}
This implies that:
\begin{enumerate}[label=(\roman*)]
\item conditional on $\omega_{s_1}$, $\bar P \equiv \bar Q$;
\item the ratio of the transition densities of $\omega_{s_1}\mid\omega_{s_0}$ under $\bar P$ and $\bar Q$ equals the expression in \eqref{eq:RNrev_OU}.
\end{enumerate}
\end{corollary}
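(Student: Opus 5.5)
The plan is to specialize Theorem~\ref{th:girsanovgeral} to $f(s,x)=-\tfrac12 x$, $g\equiv1$, and then read off (i) and (ii) from the structure of the resulting density.

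\emph{Step 1: the transition law of $\overline{Y}$.} With this choice, $\overline{Y}$ solves the linear SDE $d\overline{Y}_s=\tfrac12\overline{Y}_s\,ds+d\overline{B}_s$. By variation of constants,
\[
\overline{Y}_{s+t}=e^{t/2}\overline{Y}_s+\int_s^{s+t}e^{(s+t-u)/2}\,d\overline{B}_u .
\]
This is Gaussian with mean $e^{t/2}x$. By the It\^o isometry its covariance is $\int_0^t e^{v}\,dv\, I_d=(e^t-1)I_d$, which gives the stated transition.

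\emph{Step 2: the density formula.} We have $\nabla\cdot f=-d/2$, a bounded constant, so $\exp\{-\int_{s_0}^{s_1}\nabla\cdot f\,ds\}=e^{\frac d2(s_1-s_0)}$, which is exactly \eqref{eq:RNrev_OU}.

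The hypotheses of Theorem~\ref{th:girsanovgeral} must also be checked. Existence and uniqueness are clear, since the coefficients are linear. For positivity and $C^2$ regularity of $p(t,\cdot)$ with $t\ge T-s_1>0$, write
\[
p(t,x)=\int p_0(y)\,\phi_{d}\bigl(x;e^{-t/2}y,(1-e^{-t})I_d\bigr)\,dy .
\]
This is a Gaussian convolution, hence smooth and strictly positive; this is why the restriction $s_1<T$ matters. It is also bounded: $p_0$ is bounded, and the convolution equals $e^{dt/2}$ times $p_0(e^{t/2}\cdot)$ convolved with a Gaussian, so $\sup_x p(t,x)\le e^{dt/2}\|p_0\|_\infty$. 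Hence the martingale proviso is met.

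The main obstacle I anticipate is confirming that the right-hand side is a genuine $\bar Q$-martingale rather than just a local one. Here it should follow directly. The quantity $M_s=e^{\frac d2(s-s_0)}p(T-s,\overline{Y}_s)/p(T-s_0,\barx_0)$ is a positive local martingale: It\^o's formula together with the Fokker--Planck equation kills its drift. It is also bounded on $[s_0,s_1]$ by the bound above, so it is a true martingale.

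\emph{Step 3: consequences.} For (i), the density in \eqref{eq:RNrev_OU} depends on $\omega$ only through $\omega_{s_1}$. Therefore, for any bounded functional $F$ and any test function $h$,
\[
\E_{\bar P}[F\,h(\omega_{s_1})]=\E_{\bar Q}\bigl[F\,h(\omega_{s_1})\,c\,p(T-s_1,\omega_{s_1})\bigr],
\]
where $c=e^{\frac d2(s_1-s_0)}/p(T-s_0,\barx_0)$. Conditioning on $\omega_{s_1}$ under $\bar Q$ and pulling out the $\sigma(\omega_{s_1})$-measurable factor shows that the regular conditional laws given $\omega_{s_1}$ coincide. In other words, both processes share the same bridges.

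For (ii), take $F\equiv1$. This shows that the law of $\omega_{s_1}$ under $\bar P$ has density $c\,p(T-s_1,\cdot)$ relative to its law under $\bar Q$. Since both laws are started at $\barx_0$, these marginals are exactly the transition distributions from $\omega_{s_0}=\barx_0$. Dividing the two Lebesgue densities (the $\bar Q$ one is the Gaussian from Step 1) gives the claimed ratio.
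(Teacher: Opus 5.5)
Your proposal is correct and follows the paper's route: you specialize Theorem~\ref{th:girsanovgeral} using $\nabla\cdot f=-d/2$, then use the fact that the density depends on $\omega$ only through $\omega_{s_1}$, and your test-function conditioning argument is an explicit version of the paper's chain-rule factorization $\frac{d\bar P}{d\bar Q}(\omega)=\frac{d\bar P}{d\bar Q}(\omega_{s_1})\frac{d\tilde{\bar P}}{d\tilde{\bar Q}}(\omega)$. Your extra checks usefully supply hypotheses the paper leaves implicit: positivity and smoothness of $p(t,\cdot)$ as a Gaussian convolution for $t\ge T-s_1>0$, and the bound $\sup_x p(t,x)\le e^{dt/2}\|p_0\|_\infty$, which makes the density a bounded, hence true, martingale.
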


The next proposition, a direct consequence of the previous result, characterizes the target density of the MCMC algorithms proposed in the paper.

\begin{proposition}\label{prop:targt_mcmc}
Consider the definitions of $f$ and $g$ in Corollary \ref{Corl1}. Define $\omega_{0:T}=\{\omega_t,\;t\in[0,T]\}$ and let $\bar{q}$ be the Lebesgue density of $\omega_0$ under $\bar{Q}$. Then,
\begin{equation}\label{eq:RNrev_OU_times}
\frac{d\bar P}{d\bar Q}(\omega_{0:J}) 
\propto \frac{p_0(\omega_T)}{q(\omega_0)}.
\end{equation}
\end{proposition}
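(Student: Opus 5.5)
We will obtain \eqref{eq:RNrev_OU_times} from Corollary~\ref{Corl1} by passing from conditional path laws, started at a fixed $\barx_0$, to joint path laws that also include the initial value $\omega_0$. Under $\bar P$ the reverse diffusion is started at $\barX_0=X_T\sim p(T,\cdot)$. Under $\bar Q$ the non-reverting OU process is started from a reference density $q$ on $\omega_0$; this is the density $\bar q$ of the statement. Let $\bar P^{\barx_0}$ and $\bar Q^{\barx_0}$ denote the laws of the path on $[0,T]$ given $\omega_0=\barx_0$, so that $d\bar P(\omega)=p(T,\omega_0)\,d\omega_0\,d\bar P^{\omega_0}(\omega)$, and similarly for $\bar Q$ with $q$ in place of $p(T,\cdot)$. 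Chaining the densities then gives
\[
\frac{d\bar P}{d\bar Q}(\omega)=\frac{p(T,\omega_0)}{q(\omega_0)}\,\frac{d\bar P^{\omega_0}}{d\bar Q^{\omega_0}}(\omega).
\]

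The first main step is to apply \eqref{eq:RNrev_OU} with $s_0=0$ and $s_1=T$. The conditional factor becomes $e^{dT/2}\,p(0,\omega_T)/p(T,\omega_0)=e^{dT/2}\,p_0(\omega_T)/p(T,\omega_0)$. The unknown marginal $p(T,\omega_0)$ then cancels exactly against the initial density under $\bar P$. What remains is $e^{dT/2}\,p_0(\omega_T)/q(\omega_0)$, and the constant $e^{dT/2}$, together with the normalizing constant of $p_0$ when only $\pi_u$ is available, is absorbed into the proportionality sign. If $\omega_{0:J}$ denotes the path recorded at finitely many times $0=t_0<\dots<t_J=T$, the same identity holds for the finite-dimensional marginals. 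This is because the Radon–Nikodym derivative depends on the path only through $(\omega_0,\omega_T)$, so it is measurable with respect to the coarser $\sigma$-algebra, and the derivative of the restricted measures equals the conditional expectation of the full derivative, which is therefore unchanged. This is the content of Corollary~\ref{Corl1}(i)–(ii).

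The main obstacle is that Corollary~\ref{Corl1} is stated only for $0<s_0<s_1<T$, while we need the endpoints $s_0=0$ and $s_1=T$. We plan to apply it on $[\epsilon,T-\epsilon]$, where the formula holds, and let $\epsilon\downarrow0$. On the set of paths restricted to $[\epsilon,T-\epsilon]$, the initial law of $\omega_\epsilon$ under $\bar P$ is $p(T-\epsilon,\cdot)$, and it cancels in the same way as above. Continuity of $t\mapsto p(t,x)$ up to $t=0$, which follows from continuity of the OU semigroup and boundedness of $p_0$, together with continuity of the paths, gives convergence of the densities as $\epsilon\downarrow0$. The limit can then be identified by a martingale convergence argument for the density processes, in which uniform integrability follows from the martingale property already assumed in Theorem~\ref{th:girsanovgeral}. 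Alternatively, one can avoid limits altogether by noting that the transition-density identity in (ii) extends by continuity to the endpoints and then multiplying it by the initial densities.
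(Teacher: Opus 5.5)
Your argument is the paper's proof: you split $d\bar P/d\bar Q$ into the initial-density ratio $p(T,\omega_0)/q(\omega_0)$ times the conditional derivative given $\omega_0$, apply Corollary~\ref{Corl1} with $s_0=0$, $s_1=T$, cancel $p(T,\omega_0)$, and absorb $e^{dT/2}$ into the proportionality constant. The paper applies the corollary at the endpoints without comment, so your $\epsilon\downarrow0$ limit is extra care. Note that inside that limit the $\bar Q$-law of $\omega_\epsilon$ is the OU push-forward of $q$, not $q$ itself. A cleaner way to handle the endpoints is Bayes' rule: the forward OU transition density from $y$ to $x$ over time $T$ equals $e^{dT/2}$ times the $\mathcal{N}(e^{T/2}x,(e^T-1)I_d)$ density at $y$, which gives the transition ratio $e^{dT/2}p_0(y)/p(T,x)$ directly.
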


\subsubsection{Unbiased estimation of the marginal density of the forward diffusion}\label{ssec_IS}

Assume that the initial density has the form \(p_0(x)=c\,\pi_u(x)\), where \(\pi_u\) is a known unnormalized density.

\begin{proposition}\label{targino}
Let \(p(t,x)\) denote the marginal density of \(X_t\) from~\eqref{eq:forward}. Then
\begin{equation}\label{eq:unb_est0}
p(t,x)
= c\,e^{\frac{d}{2}t}\,
\mathbb{E}_{Z_{t,x}}\!\left[\pi_u(Z_{t,x})\right],\quad Z_{t,x}\sim\mathcal{N}(e^{t/2}x,(e^t-1)I_d).
\end{equation}
\end{proposition}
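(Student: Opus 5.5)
The plan is to write $p(t,x)$ through the explicit Gaussian transition kernel of the OU process \eqref{eq:forward} and then change variables so that the integration variable becomes the initial point. Solving \eqref{eq:forward} explicitly gives
\[
X_t = e^{-t/2}X_0 + \int_0^t e^{-(t-s)/2}\,dB_s ,
\]
so that, conditional on $X_0=y$, we have $X_t\sim\mathcal{N}(e^{-t/2}y,(1-e^{-t})I_d)$. Since $B$ is independent of $X_0$,
\[
p(t,x)=\int_{\mathbb{R}^d} p_0(y)\,(2\pi(1-e^{-t}))^{-d/2}\exp\Big\{-\frac{\|x-e^{-t/2}y\|^2}{2(1-e^{-t})}\Big\}\,dy .
\]

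The key step is to rewrite the Gaussian kernel as a density in $y$ rather than in $x$. Factor $e^{-t/2}$ out of the norm:
\[
\|x-e^{-t/2}y\|^2=e^{-t}\|y-e^{t/2}x\|^2 .
\]
Then
\[
\frac{e^{-t}}{1-e^{-t}}=\frac{1}{e^t-1},
\]
so the exponent becomes $-\|y-e^{t/2}x\|^2/(2(e^t-1))$. For the normalizing constant, note that
\[
(1-e^{-t})^{-d/2}=e^{td/2}(e^t-1)^{-d/2}.
\]
Hence the kernel equals $e^{\frac d2 t}\,\varphi_{e^{t/2}x,(e^t-1)I_d}(y)$, where $\varphi_{\mu,\Sigma}$ denotes the $\mathcal{N}(\mu,\Sigma)$ density. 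Substituting $p_0=c\,\pi_u$ gives
\[
p(t,x)=c\,e^{\frac d2 t}\int\pi_u(y)\,\varphi_{e^{t/2}x,(e^t-1)I_d}(y)\,dy ,
\]
which is exactly \eqref{eq:unb_est0}.

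There is no real obstacle here. The only points requiring care are the bookkeeping of the constant $e^{\frac d2 t}$ and the case $t=0$, where the variance $e^t-1$ vanishes. I would treat $t=0$ as the degenerate limit, in which $Z_{0,x}=x$ and the identity reduces to $p_0=c\,\pi_u$; alternatively, I would simply restrict the statement to $t>0$. Fubini/Tonelli justifies writing the marginal as the integral above, since all integrands are nonnegative.

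As a consistency check, the factor $e^{\frac d2 t}$ matches the factor $\exp\{\frac d2(s_1-s_0)\}$ in Corollary~\ref{Corl1}. The law $\mathcal{N}(e^{t/2}x,(e^t-1)I_d)$ is precisely the transition of the non-reverting process $\overline{Y}$ there. An alternative route is therefore to take expectations of \eqref{eq:RNrev_OU} under $\bar Q$ over $[T-t,T]$, using that $d\bar P/d\bar Q$ integrates to one. I would nevertheless present the direct Gaussian computation, since it needs no martingale conditions.
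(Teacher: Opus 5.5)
Your proof is correct and follows the paper's own argument: you write $p(t,x)$ as the integral of $p_0$ against the OU transition kernel $\mathcal{N}(e^{-t/2}y,(1-e^{-t})I_d)$ and reinterpret that kernel, as a function of the initial point $y$, as a rescaled $\mathcal{N}(e^{t/2}x,(e^t-1)I_d)$ density. You also go slightly further than the paper by checking, via $(1-e^{-t})^{-d/2}=e^{td/2}(e^t-1)^{-d/2}$, that the unspecified constant $c(t)$ in its proof is exactly $e^{\frac d2 t}$.
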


This yields the unbiased estimator $\hat p(t,x)= c\,e^{\frac{d}{2}t}\,\frac{1}{m}\sum_{i=1}^m
\pi_u(Z_i)$, which evaluates \(\pi_u\) at points of order \(\mathcal{O}(e^{t/2})\) (for \(x=\mathcal{O}(1)\)), and thus becomes increasingly unstable as \(t\) grows, since it probes the tails of \(\pi\). To stabilize this behavior, we introduce an importance sampling (IS) version of the estimator.

Let \(\tilde q_t\) denote the density of \(\mathcal{N}(e^{t/2}x,(e^t-1)I_d)\). For a proposal density \(q_t\), define
\begin{equation}\label{eq:unb_est2}
\hat p_{\mathrm{IS}}(t,x)
=
c\,e^{\frac{d}{2}t}\,
\frac{1}{m}\sum_{i=1}^m
\pi_u(Z_i)\,w(Z_i,t),
\qquad Z_i\stackrel{iid}{\sim}q_t, \quad w(z,t)=\frac{\tilde q_t(z)}{q_t(z)}.
\end{equation}
Since only an unbiased estimator of the unnormalized $p(t,\cdot)$ is needed withing the proposed algorithms, the constant term $c\,e^{\frac{d}{2}t}/m$ can be ignored.

If \(\pi=\mathcal{N}(\mu,\sigma^2 I_d)\), then the optimal (zero-variance) IS proposal is proportional to \(\pi(y)\tilde q_t(y)\), and is therefore Gaussian:
\begin{equation}\label{IS_prop}
q_t^G
=
\mathcal{N}\!\left(
\frac{(e^t-1)\mu + \sigma^2 e^{t/2}x}{\sigma^2+e^t-1},
\;
\frac{\sigma^2(e^t-1)}{\sigma^2+e^t-1}I_d
\right).    
\end{equation}
This interpolates between a point mass at \(x\) as \(t\downarrow 0\) and \(\pi\) as \(t\to\infty\).

When \(\pi\) is not Gaussian, we mimic this optimal structure and make $q_t=q_t^G$. In practice, \(\pi\) is standardized (e.g., via preliminary MCMC runs), and we set \(\mu=0\) and \(\sigma^2>1\). The choice \(\sigma^2>1\) inflates the proposal variance, providing robustness against multimodality and non-linear dependence of $\pi_u$.

The IS estimator may exhibit large or even infinite variance if the proposal \(q_t\) has lighter tails than \(\tilde q_t\). Its variance is finite if and only if
\[
\mathbb{E}_{q_t}\!\left[\pi_u(Z)^2 w(Z,t)^2\right] < \infty.
\]
Since \(\pi_u\) is assumed bounded, we only require $\mathbb{E}_{q_t}[w(Z,t)^2] < \infty$.

For the Gaussian proposal described above, this condition holds if and only if $\sigma^2 > e^t - 1$.
However, this requirement is of limited practical interest, as it forces the proposal variance to grow at the same rate as that of \(\tilde q_t\), and therefore does not achieve the goal of stabilizing the argument of \(\pi\) in the IS estimator.

If \(\pi\) has (sub-)Gaussian tails, the condition can be substantially relaxed, as shown in the following result.

\begin{proposition}\label{prop.GTEst}
Suppose that there exist constants \(C>0\), \(a>0\), and \(R<\infty\) such that
\[
\pi(y)\leq C e^{-a\|y\|^2}, \quad \|y\|>R.
\]
Then the IS estimator in \eqref{eq:unb_est2} has finite variance if $\ds \sigma^2>\frac{e^t-1}{1+4a(e^t-1)}$.
\end{proposition}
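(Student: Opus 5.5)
The plan is to write the second moment of a single IS summand explicitly as a Gaussian-type integral and check when the exponent is negative definite. Since the summands in \eqref{eq:unb_est2} are i.i.d., the estimator has finite variance if and only if
\[
\mathbb{E}_{q_t}\!\left[\pi_u(Z)^2 w(Z,t)^2\right]
=\int_{\mathbb{R}^d} \pi_u(z)^2\,\frac{\tilde q_t(z)^2}{q_t(z)}\,dz<\infty .
\]
Here $q_t=q_t^G$ with $\mu=0$. This is a Gaussian with mean $m=\sigma^2e^{t/2}x/(\sigma^2+v)$ and variance $s^2 I_d$, where $v:=e^t-1$ and $s^2=\sigma^2 v/(\sigma^2+v)$. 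The target-side Gaussian is $\tilde q_t=\mathcal{N}(e^{t/2}x, vI_d)$. Since $\pi_u$ differs from $\pi$ only by a constant, the assumed tail bound holds for $\pi_u$ with a modified $C$.

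I split the integral into the ball $\{\|z\|\le R\}$ and its complement.

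\emph{The ball.} On $\{\|z\|\le R\}$, the function $\pi_u$ is bounded, and $\tilde q_t^2/q_t$ is continuous. The integrand is therefore bounded on a compact set, so this part is finite.

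\emph{The complement.} On $\{\|z\|>R\}$, I use $\pi_u(z)^2\le C^2e^{-2a\|z\|^2}$ and write out the Gaussian densities. Up to constants depending on $t,x,\sigma$, the integrand is bounded by
\[
\exp\Big\{-2a\|z\|^2-\tfrac{1}{v}\|z-e^{t/2}x\|^2+\tfrac{1}{2s^2}\|z-m\|^2\Big\}.
\]
The exponent is a quadratic polynomial in $z$ with leading coefficient $-\big(2a+\tfrac1v-\tfrac1{2s^2}\big)\|z\|^2$, plus linear and constant terms. Such an integral over $\mathbb{R}^d$ is finite as soon as the leading coefficient is strictly negative, since the linear terms can be absorbed by completing the square.

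\emph{The coefficient condition.} The main step is to simplify this condition using the explicit form of $s^2$:
\[
\frac{1}{2s^2}=\frac{\sigma^2+v}{2\sigma^2 v}=\frac{1}{2v}+\frac{1}{2\sigma^2}.
\]
The requirement $2a+\tfrac1v-\tfrac1{2s^2}>0$ therefore becomes $2a+\tfrac{1}{2v}>\tfrac{1}{2\sigma^2}$, which is
\[
\sigma^2>\frac{1}{4a+1/v}=\frac{e^t-1}{1+4a(e^t-1)} .
\]
This is exactly the condition in the statement.

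\emph{Checks.} As $a\to0$ the condition recovers the general condition $\sigma^2>e^t-1$ noted before the proposition, which is a useful consistency check. No real obstacle is expected. The only care needed is to track the quadratic coefficients correctly, noting that $q_t$ appears in the denominator and so contributes a positive term. The mean shifts $m$ and $e^{t/2}x$ affect only lower-order terms and do not influence integrability.
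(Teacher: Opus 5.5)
Your proposal is correct and follows essentially the same route as the paper's proof: bound the integral over the ball $\{\|y\|\le R\}$ by boundedness of $\pi$, bound $\pi^2$ by $C^2e^{-2a\|y\|^2}$ outside it, and require the leading quadratic coefficient $-2a-\frac{1}{v}+\frac{1}{2\tau^2}$ of $\pi^2\tilde q_t^2/q_t$ to be negative. Substituting $\tau^2=\sigma^2 v/(\sigma^2+v)$ then gives exactly $\sigma^2>v/(1+4av)$; your split $\frac{1}{2s^2}=\frac{1}{2v}+\frac{1}{2\sigma^2}$ is only a slightly tidier way of doing that final algebra.
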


More generally, variance can be expressed in terms of the chi-square divergence.

\begin{proposition}\label{prop.stTEst}
Let
\[
q^\star(y)=\frac{\pi(y)\tilde q_t(y)}{r},
\qquad
r=\int \pi(y)\tilde q_t(y)\,dy.
\]
Then the single-sample IS estimator of \(r\) has variance
\[
\operatorname{Var}_{q_t}\!\left(\frac{\pi(Y)\tilde q_t(Y)}{q_t(Y)}\right)
=
r^2\,\chi^2(q^\star\|q_t), \quad
\text{for}\quad
\chi^2(q^\star\|q_t)
=
\mathbb{E}_{q_t}\!\left[\left(\frac{q^\star}{q_t}(Y)-1\right)^2\right].
\]
\end{proposition}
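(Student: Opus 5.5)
The identity in Proposition~\ref{prop.stTEst} is a direct algebraic computation, so the plan is to write the single-sample IS weight in terms of $q^\star$ and expand the variance. By definition of $q^\star$ we have $\pi(y)\tilde q_t(y) = r\,q^\star(y)$, so the random variable in question is
\[
W := \frac{\pi(Y)\tilde q_t(Y)}{q_t(Y)} = r\,\frac{q^\star(Y)}{q_t(Y)}, \qquad Y\sim q_t .
\]
The variance computation therefore reduces to that of the likelihood ratio $q^\star/q_t$ under $q_t$, scaled by $r^2$.

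The first step is unbiasedness: $\E_{q_t}[q^\star(Y)/q_t(Y)] = \int q^\star(y)\,dy = 1$. This uses that $q^\star$ is a probability density, since $r=\int\pi\tilde q_t$ is its normalizing constant and is finite and positive because $\pi$ is bounded and $\tilde q_t$ is a Gaussian density. It also requires that $q_t>0$ wherever $q^\star>0$, i.e.\ $q^\star\ll q_t$. This holds automatically for the Gaussian proposals \eqref{IS_prop} used in the paper, and in general I would state it as the standing assumption under which the IS estimator is defined. Hence $\E_{q_t}[W]=r$.

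The second step is the variance itself:
\[
\Var_{q_t}(W) = \E_{q_t}\!\left[(W-r)^2\right] = r^2\,\E_{q_t}\!\left[\left(\frac{q^\star}{q_t}(Y)-1\right)^2\right] = r^2\,\chi^2(q^\star\|q_t).
\]
This equality holds in $[0,\infty]$, so it also covers the infinite-variance case: both sides are infinite simultaneously. This is consistent with the earlier finite-variance discussion, since $\chi^2(q^\star\|q_t)<\infty$ if and only if $\E_{q_t}[(q^\star/q_t)^2]<\infty$.

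There is no real obstacle. The only point needing care is the absolute-continuity and support condition that makes $\E_{q_t}[q^\star/q_t]=1$ rather than at most $1$. I would note this condition explicitly and observe that it is satisfied by the full-support Gaussian $q_t$ considered in the paper.
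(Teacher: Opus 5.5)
Your proof is correct and is essentially the paper's argument: the paper computes $\E_{q_t}[g(Y)^2]=r^2\int (q^\star)^2/q_t\,dy$, subtracts $r^2$, and separately expands $\chi^2(q^\star\|q_t)=\int (q^\star)^2/q_t\,dy-1$, whereas you center first and read off $\Var_{q_t}(W)=r^2\,\E_{q_t}[(q^\star/q_t-1)^2]$ directly, which is the same identity. Your explicit remark that $q^\star\ll q_t$ is needed for $\E_{q_t}[q^\star/q_t]=1$ (and hence $\E_{q_t}[W]=r$) states an assumption the paper uses only implicitly, and reading the equality in $[0,\infty]$ also covers the infinite-variance case.
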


\medskip

\noindent
The above result provides a clear criterion for assessing the effectiveness of the IS proposal. In particular, the variance of the IS estimator is reduced relative to the original estimator if and only if
\[
\chi^2(q^{\star}\|q_t) < \chi^2(q^{\star}\|\tilde q_t)=\mathrm{CV}_{\tilde q_t}^2(\pi).
\]
That is, variance reduction is achieved precisely when the proposal \(q_t\) is closer to the optimal distribution \(q^{\star}\) than the original sampling distribution \(\tilde q_t\), in the sense of the chi-square divergence.

\medskip

\noindent
\textbf{Heavy-tailed targets.}
We now construct a robust IS proposal when \(\pi\) has (sub-)Student-\(t\) tails. Suppose that there exist constants \(C>0\), \(\nu>0\), and \(R<\infty\) such that
\[
\pi(y)\leq C(1+\|y\|^2)^{-(\nu+d)/2}, \quad \|y\|>R.
\]

In this setting, the optimal proposal is not available in closed form. We retain the Gaussian-optimal location and scale, and replace the Gaussian by a Student-\(t\) distribution:
\[
q_t(y)
=
t_{\nu_q}\left(
y;
\frac{\sigma^2 e^{t/2}x}{\sigma^2+e^t-1},
\;
\omega_t^2 I_d
\right),\qquad
\omega_t^2
=
\frac{\sigma^2(e^t-1)}{\sigma^2+e^t-1}\,c(\nu_q),
\quad
c(\nu_q)=
\begin{cases}
\frac{\nu_q-2}{\nu_q}, & \nu_q>2,\\
1, & \nu_q\leq 2.
\end{cases}
\]

The resulting IS estimator has finite variance for all \(\nu>0\) and \(\nu_q>0\), due to the Gaussian factor in \(\tilde q_t\).

In practice, we recommend choosing $\nu_q \lesssim \nu$  and $\sigma^2>1$, so that the proposal has tails at least as heavy as those of \(\pi\), while retaining efficiency.

\subsection{Monte Carlo building blocks}\label{ssec_MCBB}

The IS estimator devised in the previous section enables unbiased estimation not only of the marginal density of \(X_t\), but also of the transition density of the reverse diffusion, via Corollary~\ref{Corl1}. These estimators, in turn, provide the building blocks for (approximate) sampling from these distributions using standard Monte Carlo algorithms.

Classical Monte Carlo methods such as rejection sampling, sampling--importance--resampling (SIR), and Metropolis-type MCMC depend on the ratio between target and proposal densities. These methods can still be implemented when the target density is intractable, provided that an almost surely positive unbiased estimator is available.

In rejection sampling, if a proposal value \(x\) has acceptance probability \(\alpha(x)\), it suffices to construct an unbiased estimator \(\hat{\alpha}(x,U)\in[0,1]\) such that \(\mathbb{E}_U[\hat{\alpha}(x,U)] = \alpha(x)\), and then simulate a Bernoulli random variable with success probability \(\hat{\alpha}(x,U)\) to determine acceptance. In the SIR algorithm, the unnormalized importance weights can be replaced by unbiased estimators. Similarly, pseudo-marginal Metropolis--Hastings (PMMH) can be implemented on an augmented state space by replacing the target density with its unbiased estimator \citep{andrieu2009}.

In the present setting, however, the uniform bound on \(\pi\) required for rejection sampling leads to a loose control of the density ratio. As a consequence, the global acceptance probability decays at least exponentially fast in \(t\) (see Proposition \ref{prop:RSacc} in Appendix \ref{ap:EA}), rendering the method impractical for moderate or large values of \(t\).

By contrast, SIR and PMMH do not require such a bound, at the cost of only approximately sampling from the target distribution. A key difference between them is that SIR relies on independent proposals, whereas PMMH allows for local (e.g., random walk-type) proposals. Since no natural independent proposal is available for the marginal of \(X_t\), and the Gaussian proposal \(\mathcal{N}(e^{\Delta/2}x,(e^\Delta-1)I_d)\) for the reverse transition over an interval of length \(\Delta\) may be poorly matched to the target, PMMH with local proposals is preferred. Nevertheless, SIR algorithms targeting the marginal of $X_T$ and the reverse transition remain of interest, as they enable the construction of global MCMC schemes on \(X_0\). We therefore consider SIR for $X_T$ and for the reverse transition, as well as two PMMH schemes: one targeting the reverse transition and one targeting the marginal of \(X_t\).

\begin{remark}
A standard particle filter could be constructed by propagating and resampling \(N\) particles via SIR. However, the weights at time \(t\) involve the intractable quantity \(p(t+\Delta,\cdot)\). In the single-particle scheme, this term is common across candidates and cancels in the normalized weights. In a multi-particle setting, it varies across particles and cannot be ignored, requiring unbiased estimators of both \(p(t+\Delta,\cdot)\) and its reciprocal \citep{turnkey2025}, with a substantial increase in complexity.
\end{remark}

\paragraph{SIR for the reverse transition.}

Let \(0 \leq t < t+\Delta \leq T\). By Corollary~\ref{Corl1}, the conditional distribution of \(X_t \mid X_{t+\Delta} = x\) under~\eqref{eq:forward} is absolutely continuous with respect to
\[
\mathcal{N}\!\left(e^{\Delta/2}x,(e^{\Delta}-1)I_d\right),
\]
with Radon--Nikodym derivative proportional to \(p(t, \cdot)\).

Since \(p(t,\cdot)\) is not available in closed form (except at \(t=0\)), we replace this quantity by the unbiased estimator in~\eqref{eq:unb_est2}. This leads to the following SIR algorithm.

\begin{algorithm}[H]
\caption{SIR algorithm}\label{alg:sir}
\KwData{Time $t$ and increment $\Delta$, number of particles $n$ and estimator sample size $m$.}
\KwResult{An approximate realization of $X_t \mid X_{t+\Delta}=x$}

Draw $X_t^{(i)} \sim 
\mathcal{N}\!\left(e^{\Delta/2}x,(e^{\Delta}-1)I_d\right)$,
$i = 1,\dots,n$\;

Draw $Z_{\ell}^{(i)} \sim q_t$,
$\ell = 1,\dots,m$, $i = 1,\dots,n$\;

$\ds w^{(i)} \gets 
\hat p_{\mathrm{IS}}(t,X_{t}^{(i)}),
\quad \text{if } t>0,\qquad
w^{(i)} \gets \pi_u\!\left(X_0^{(i)}\right), \ \text{if } t=0$\;

Normalize weights: $\ds W^{(i)} \gets
\frac{w^{(i)}}{\sum_{k=1}^{n} w^{(k)}}$\;

Draw $I \sim \mathrm{Categorical}\!\left(
W^{(1)},\dots,W^{(n)}\right)$\;

Set $X_t \gets X_t^{(I)}$\;
\end{algorithm}

\paragraph{SIR for the marginal of $X_T$.}

The SIR algorithm for $X_T$ draws $n_J$ particles from $\bar{q}_T$ and selects one with probability proportional to an unbiased estimator of the weight $p(T,\cdot)/\bar{q}_T(\cdot)$, where the numerator is replaced by the estimator in~\eqref{eq:unb_est2} with $t=T$.

\paragraph{PMMH for the reverse transition.}

A key feature of the reverse-time formulation is that the conditional distribution of \(X_t \mid X_{t+\Delta}=x\) admits a natural Gaussian reference measure. Specifically, it has a density with respect to
$\ds
\mathcal{N}\!\left(e^{\Delta/2}x,(e^{\Delta}-1)I_d\right)$,
proportional to \(p(t,\cdot)\). This structure naturally motivates the use of Gaussian-reference MCMC methods, such as the preconditioned Crank--Nicolson (pCN) algorithm, whose proposals preserve the reference measure exactly.

In contrast to standard random-walk proposals, whose efficiency typically deteriorates with dimension due to the accumulation of independent perturbations, pCN proposals respect the underlying Gaussian geometry and correct only through the Radon--Nikodym derivative. This advantage becomes more pronounced as \(t\) increases: the OU dynamics smooth the marginal density \(p_t\), leading to a more regular Radon--Nikodym derivative and improved mixing.

In this setting, the acceptance probability of a move \(y \to y^*\) is
\begin{equation}\label{PMMH_ap}
1 \wedge \frac{p(t,y^*)}{p(t,y)},
\qquad
\text{or}
\qquad
1 \wedge \frac{\hat{p}_{\mathrm{IS}}(t,y^*)}{\hat{p}_{\mathrm{IS}}(t,y)}
\end{equation}
for the ideal and pseudo-marginal cases, respectively.

The pCN proposal is given by
\begin{equation}
y^* = e^{\Delta/2}x + \sqrt{1-\beta^2}\,(y - e^{\Delta/2}x) + \beta (e^{\Delta}-1)^{1/2} Z,
\qquad Z \sim \mathcal{N}(0,I_d),
\end{equation}
where \(\beta \in (0,1)\) is tuned to achieve suitable acceptance rates.

\paragraph{PMMH for the marginal of \(X_t\).}

When targeting the marginal distribution of \(X_t\), no such Gaussian reference structure is available. In this case, we use a standard random-walk proposal,
\[
y^* \sim \mathcal{N}(y,\sigma^2 I_d),
\]
with \(\sigma^2\) tuned according to standard optimal scaling considerations \citep{roberts1997weak}. The corresponding acceptance probabilities are again given by~\eqref{PMMH_ap}.

Finally, note that the chains targeting \(X_0 \mid X_{\Delta}\) and the marginal \(X_0\) are standard Metropolis--Hastings algorithms, since the corresponding unnormalized target densities are tractable.

\section{The SPARK algorithm}

We introduce two algorithms for generating approximate samples from the target distribution \(p_0\), the marginal law of \(X_0\). We call the algorithms \emph{Sequential Pseudo-marginal Algorithm with Reverse-time Kernels} (SPARK). Both are constructed as sequences of MCMC schemes along a time schedule
\[
0 = t_0 < t_1 < \cdots < t_J = T.
\]

The construction is guided by two principles: (i) the sequence of target distributions evolves smoothly in time, so that successive targets are locally close; and (ii) each MCMC step is required only to mix locally around its initialization, which is inherited from the previous step. This construction can be viewed as a sequential local approximation to a global MCMC targeting \(p_0\).

Each run of the algorithm produces a single approximate draw from \(p_0\); independent repetitions yield an approximate i.i.d.\ sample.

The first algorithm consists of a sequence of PMMH chains targeting
\[
X_T, \quad X_{t_{J-1}} \mid X_T, \quad \ldots, \quad X_{t_1} \mid X_{t_2}, \quad X_0 \mid X_{t_1},
\]
where the output of each chain provides the conditioning value for the next. Since the targets evolve gradually, each chain need only explore a local neighborhood of its starting point. The procedure yields an approximate reverse-time trajectory and, in particular, a draw from \(X_0\).

The second algorithm augments this sequence by inserting PMMH steps targeting the marginals:
\[
X_T,\quad
\{(X_{t_{j}} \mid X_{t_{j+1}}),\;\; X_{t_j}\}_{j=J-1}^{1},\quad
X_0 \mid X_{t_1},\;\; X_0.
\]
These additional steps mitigate the accumulation of approximation error by periodically re-targeting the marginal distributions. The resulting scheme no longer produces trajectories, but yields approximate samples from each marginal; in particular, the final step is a standard Metropolis--Hastings algorithm targeting \(p_0\).

The marginal-refreshing steps increase computational cost, but can substantially reduce the effort required in the conditional chains. In practice, this trade-off often favors the second algorithm.

\begin{algorithm}[H]
\caption{SPARK algorithm}\label{alg:spark}
\KwData{Time schedule $0 = t_0 < \cdots < t_J = T$; iterations $n_J$, $\{n_{j1},n_{j2}\}_{j=0}^{J-1}$; PM estimator sample sizes; number of replications $M$.}
\KwResult{Independent approximate samples $\{x_0^{(i)}\}_{i=1}^M$}

\For{$i=1,\dots,M$}{
    Run a PMMH chain targeting $X_T$ for $n_J$ iterations; set $x_{t_J}^{(i)}$\;
    
    \For{$j=J-1,\dots,0$}{
        Run a PMMH chain (MH if $j=0$) targeting $X_{t_j}\mid X_{t_{j+1}}=x_{t_{j+1}}^{(i)}$ for $n_{j1}$ iterations; set $\tilde{x}_{t_j}^{(i)}$\;
        
        Run a PMMH chain (MH if $j=0$) targeting $X_{t_j}$ initialized at $\tilde{x}_{t_j}^{(i)}$ for $n_{j2}$ iterations; set $x_{t_j}^{(i)}$\;
    }
}
\Return{$\{x_0^{(i)}\}_{i=1}^M$}
\end{algorithm}

Unlike standard Metropolis--Hastings algorithms, for which a well-developed asymptotic theory provides robust and widely applicable guidelines for tuning proposal scales and acceptance rates, there is comparatively limited theory for tuning the variance of the estimator and the resulting acceptance rate in pseudo-marginal methods. The main available results, notably \citet{SherlockThieryRobertsRosenthal2015} and \citet{DoucetPittDeligiannidisKohn2015}, derive optimality criteria under a number of strong and stylized assumptions. These typically include: (i) an additive noise representation for the log-target estimator; (ii) independence between the noise distribution and the point at which the target is evaluated; (iii) approximate Gaussianity of the log-noise, with variance inversely proportional to the computational effort; and, in some cases, (iv) high-dimensional asymptotics with product-form targets and isotropic random-walk proposals. While analytically convenient, these assumptions are unlikely to hold in many realistic applications and are explicitly acknowledged as approximations in the aforementioned works. In particular, the independence of the noise from the state is generally violated in practice and is introduced primarily to enable tractable analysis.

Moreover, the structure considered here differs substantially from the setting of these results. We employ a sequence of PMMH kernels targeting a collection of intermediate distributions, with only the final target corresponding to the posterior of interest; earlier targets are close to a Gaussian reference. In addition, the unnormalized target density is available in closed form, so that the final MH step is exact rather than pseudo-marginal, and each chain is run only for a small number of iterations with the aim of producing a single approximately independent draw via multiple independent replications. This regime emphasizes local mixing rather than long-run stationary efficiency and falls outside the asymptotic frameworks considered in the existing theory.

Taken together, these considerations suggest that the available prescriptions, such as targeting a specific variance of the log-target estimator or a corresponding acceptance rate, should not be expected to provide reliable or universal guidance in this setting. Instead, we advocate tuning the variance and acceptance rate empirically, on a case-by-case basis, using diagnostics that directly reflect computational efficiency, such as effective sample size per unit time, acceptance probabilities, short-lag autocorrelations, and the stability of the estimator across the relevant regions of the state space.

\subsection{Practical considerations}

\paragraph{Rescaling $p_0$.}

Since the Ornstein--Uhlenbeck (OU) forward diffusion converges to a standard Gaussian distribution regardless of the initial law, it is natural to place the initial target on the same global scale as this reference measure. Accordingly, we standardize $p_0$ to have mean zero and unit variance using model knowledge and/or empirical estimates. This normalization also facilitates the specification of proposal distributions in the importance sampling estimators of $p_t$ (see Section~\ref{ssec_IS}).

When a reliable estimate of the correlation structure is available, we further whiten the target so that its covariance matrix is close to the identity. This removes linear dependence and isolates the genuinely non-Gaussian features of the distribution, such as multimodality, skewness, non-linear dependence, or heavy tails. The OU dynamics then act primarily on these features, progressively regularizing the distribution as $t$ increases.

From an algorithmic perspective, this normalization provides a convenient baseline under which the time horizon $T$ directly controls the degree of smoothing required for the marginal $X_T$ to become amenable to efficient local MCMC exploration.

In practice, the required transformations can be obtained from short preliminary runs, as is standard in algorithms such as random-walk Metropolis--Hastings. The resulting estimates need not be highly accurate: their role is simply to place the problem on a sensible global scale and reduce linear dependence prior to applying SPARK.

\paragraph{Choice of the terminal time $T$.}

The terminal time $T$ plays a central role in the efficiency of SPARK, as it governs a fundamental trade-off. As $T$ increases, the marginal distribution of $X_T$ approaches the standard Gaussian, making it easier for independent replications of the PMMH algorithm targeting this marginal to approximate it accurately. Conversely, larger values of $T$ require finer time discretizations (larger $J$), increasing the overall computational cost.

A natural strategy is therefore to select the smallest value of $T$ for which independent replications of a relatively inexpensive PMMH scheme targeting $p_T$ yield a representative sample. This choice can be guided by theoretical results such as the ones in Proposition~\ref{prop_TV}.

\paragraph{Choice of the time schedule.}

The time schedule $\{t_j\}_{j=0}^J$ is also critical for efficiency. There is again a trade-off: coarser schedules (smaller $J$) reduce the number of PMMH chains but make the corresponding conditional targets more challenging, while finer schedules improve local mixing at the cost of additional chains.

In line with the choice of $T$, one should therefore select the coarsest schedule for which the conditional PMMH updates remain reasonably efficient. More importantly, the placement of the time points should reflect the non-uniform rate at which the diffusion regularizes the target. In particular, Proposition~\ref{prop_TV} indicates that convergence is exponential in $t$, suggesting that the spacing between consecutive times should decrease as $t \downarrow 0$.

A practical way to balance efficiency across levels is to choose the schedule so that the conditional PMMH kernels exhibit comparable acceptance rates for a common choice of the pCN tuning parameter $\beta$.

\paragraph{Specification of the PMMH algorithms.}

We now discuss the main specifications of the PMMH algorithms within SPARK, including initialization, estimator variance, acceptance rates, and number of iterations.

Initialization is straightforward. For the marginal chain at $t=T$, it is natural to draw the initial value from $\mathcal{N}(0,I_d)$, as the target is expected to be close to this distribution. For chains targeting $X_{t_j}$ with $t_j<T$, a natural choice is to initialize at the final state of the preceding conditional PMMH chain targeting $X_{t_j}\mid X_{t_{j+1}}$. For this conditional chain, a convenient initialization is to draw a random value from the Gaussian reference distribution $\mathcal{N}\!\big(e^{\Delta_j}x_{t_{j+1}}, (e^{\Delta_j}-1)I_d\big)$, for $\Delta_j=t_{j+1}-t_j$.

The variance of the importance sampling estimator in \eqref{eq:unb_est2} typically increases with $t$, implying that the corresponding pseudo-marginal noise is larger for early stages of the algorithm. Empirical estimates of the variance of the log-noise can be obtained at representative points (e.g.\ within $[-2,2]^d$ under the standardized scale). In practice, however, the most reliable guidance comes from standard MCMC diagnostics, including trace plots, autocorrelation functions, and the dependence of the acceptance rate on the proposal tuning.

As $t \downarrow 0$, the variance of the estimator decreases and the PMMH chains approach the ideal marginal MH algorithm. Accordingly, the optimal acceptance rate is expected to increase toward the standard MH regime (e.g.\ $0.44$ in dimension $1$ and $0.234$ in moderate-to-high dimensions), while for larger $t$ lower acceptance rates are typically appropriate. As a rough reference, \citet{SherlockThieryRobertsRosenthal2015} suggest an acceptance rate of approximately $0.07$ for random-walk PMMH in high dimensions.

Given these considerations, and since each chain is only required to mix locally, a few hundred iterations are typically sufficient for each PMMH component.

\paragraph{Alleviating the computational cost.}

A potential limitation of SPARK is its computational cost, as it requires multiple independent replications of a sequence of MCMC chains, each involving repeated evaluations of the unnormalized density $\pi_u$.

On the other hand, the independence of the replications yields an independent sample, eliminating the long-range autocorrelation typically associated with standard MCMC. This structure can be exploited to reduce the number of required replications. In particular, the last chain in the SPARK sequence is a standard MH chain targeting the marginal of $X_0$; instead of running this chain only long enough to produce a single output per replication, one may run it for a larger number of iterations and retain multiple samples. Since this chain operates in a regime where only local mixing is required, it can efficiently explore local structure, thereby reducing the burden on the replication mechanism.

For example, in multimodal settings, the independent replications primarily serve to approximate the relative weights of the modes, while the local MH chains targeting $X_0$ are responsible for exploring each mode. This separation of global and local mixing can substantially reduce the overall computational cost.

\section{Ping-Pong MCMC}\label{sec.:pingpong}

The SPARK algorithm is typically the more robust option in difficult examples, since it relies on a sequence of MCMC updates with random-walk type proposals rather than on sequential importance-resampling steps with independent proposals. Its limitation is that the final approximation depends on the joint behaviour of several Markov chains. The \emph{Ping-Pong} construction provides a complementary alternative: it defines a single Markov chain whose marginal invariant distribution is exactly $p_0$. It is therefore less generally robust than SPARK, but can offer higher precision when the corresponding proposals are sufficiently accurate, and it can also be embedded as an exact block update within larger Gibbs sampling schemes, with $p_0$ playing the role of a full conditional density.

A direct Metropolis--Hastings correction of SPARK itself would require the density, or an unbiased estimator of the density, of its output distribution. This is not available because the SPARK proposal is generated through internal accept--reject MCMC dynamics. We therefore replace SPARK by a sequential reverse-time SIR proposal, for which unbiased estimators of the output density are available. We call this proposal mechanism \emph{Sequential Path Importance sampling for Diffusion Equations in Reverse time} (SPIDER).

\begin{algorithm}[H]
\caption{SPIDER algorithm}\label{alg:spider}
\KwData{Time schedule $0=t_0<\cdots<t_J=T$; particles $\{N_j\}_{j=0}^J$; weight estimator sample sizes.}
\KwResult{An approximate path sample $\{x_{t_j}\}_{j=0}^J$}

Sample $x_T$ by SIR with proposal $\bar{q}_T$ and weights $\hat p_{IS}(T,\cdot)/\bar{q}_T(\cdot)$\;
\For{$j=J-1,\ldots,0$}{
    Conditional on $x_{t_{j+1}}$, sample $x_{t_j}$ by SIR using Algorithm~\ref{alg:sir}, with $n=N_j$\;
}
\Return{$\{x_{t_j}\}_{j=0}^J$}
\end{algorithm}

The Ping-Pong algorithms operate (initially) on the extended path $X_0,\;X_{t_1},\ldots,X_{t_{J-1}},\;X_T$,
and use SPIDER proposals inside Metropolis-type transitions. Three elementary updates are used: 1) a forward update of $X_{t_1},\ldots,X_T\mid X_0$, simulated exactly from~\eqref{eq:forward}; 2) a backward update of $X_0,\ldots,X_{t_{J-1}}\mid X_T$, proposed by SPIDER and corrected by Barker's rule; and 3) a joint update at the whole time schedule, also proposed by SPIDER and corrected by Barker's rule. Alternating updates 1) and 2) yields a Gibbs-type scheme, update 3) alone gives an independence MH-type chain, and hybrid schemes combine local correction with occasional global moves.

The algorithms can be formulated on the space of continuous trajectories in $[0,T]$. By Corollary~\ref{Corl1}, the bridges of $\bar{P}$ and $\bar{Q}$ have the same law, so only the values at the selected time schedule enter the accept--reject step explicitly. Consequently, the time schedule may be changed across MCMC iterations. In practice, one may keep comparable increments while shifting the time points; since both proposal and acceptance probabilities depend on the schedule, this can help avoid persistent sticking caused by an unfavourable fixed schedule. Intermediate path values can be sampled retrospectively, after the accept-reject step, from the corresponding OU bridge law.

The SPIDER output density is intractable, but it admits a positive bounded unbiased estimator. Let $\tilde Q$ denote the SPIDER law, completed with bridges from $\bar Q$ and assume that $X_T$ has Lebesgue density $\bar{q}_T$ under $\bar{Q}$, we have that (see Proposition \ref{prop.ueSP_IS} in Appendix \ref{ssec.:MCMCanls})
\[
q_S(x):=\frac{d\tilde{Q}}{d\bar{Q}}(x)=q_J(x_T) \prod_{j=J-1}^0 q_j(x_{t_{j}}|x_{t_{j+1}}),
\]
\begin{equation}\label{eq.qjSP}
q_j(x_{t_{j}}|x_{t_{j+1}})
=
n_j
\mathbb{E}_{\dot{X}_j,Z_j}
\left[
\frac{
w_{j}(x_{t_{j}},Z_{j,n_j})
}{
w_{j}(x_{t_{j}},Z_{j,n_j})
+
\sum_{i=1}^{n_j-1}
w_{j}(\dot{X}_{j,i},Z_{j,i})
}
\right],\;\text{for } q_J(x_{t_{J}}|x_{t_{J+1}}):=q_J(x_T),
\end{equation}
\[
w_0(x)=\pi_u(x), \quad
w_j(x,z)=\sum_{l=1}^{m_j}
\pi_u(z_{l})\,\frac{\tilde{q}_{t_j}(z_{l})}{q_{t_j}(z_{l})},
\;\; 1\leq j\leq J-1, \quad w_J(x,z)=\frac{1}{\bar{q}_T(x_T)}\sum_{l=1}^{m_
j}
\pi_u(z_{l})\,\frac{\tilde{q}_{T}(z_{l})}{q_{T}(z_{l})}
\]
where
$
\dot{X}_{j,i}\overset{i.i.d.}{\sim}
\mathcal{N}\!\left(
e^{\Delta t_j/2}X_{t_{j+1}},\,
(e^{\Delta t_j}-1)I_d
\right)$, $j=0,\ldots,J-1$; $\dot{X}_{J,i}\overset{i.i.d.}{\sim}\bar{q}_T$;
$Z_{j,i}=\{Z_{j,i,l}\}_{l=1}^{m_j}$,
$Z_{j,i,l}\overset{i.i.d.}{\sim}q_{t_j}$,
$j=1,\ldots,J$,
and $w_0(x,z):=w_0(x)$.

An unbiased estimator for $q_S$ is obtained by replacing the expectation in \eqref{eq.qjSP} with its Monte Carlo realization. Importantly, this estimator is almost surely positive and bounded, which enables the construction of exact MCMC updates targeting the true distribution.

Although this suggests pseudo-marginal Metropolis--Hastings, that route is not attractive here. It would require augmenting the state space with all auxiliary variables used to estimate the proposal density, and the intractable object is the proposal rather than the target, so that a standard pseudo-marginal construction would require an unbiased estimator of a reciprocal proposal density, therefore demanding significant extra auxiliary variables and computational effort. We instead use Barker's rule \citep{barker1965}, whose acceptance probability can be simulated by a Bernoulli factory using unbiased estimators of the proposal density itself.
For a target density $\pi$ and a proposal density $q$, both w.r.t. the same dominating measure, Barker's acceptance probability for a move $x \to y$ is
\begin{equation}\label{eq.:Bkrap}
\alpha_B(x,y)
=
\frac{p(y)q(x)}{p(y)q(x)+p(x)q(y)}.
\end{equation}

Although Metropolis-Hastings dominates Barker's in the Peskun ordering \citep{peskun1973}, the latter is at least half as efficient as MH in terms of asymptotic variance and inherits a CLT whenever MH does \citep{latuszynski2013}.

For the Ping-Pong algorithm, the target density w.r.t. $\bar Q$ is given by~\eqref{eq:RNrev_OU_times}. Therefore, the Barker acceptance probability of a SPIDER proposal $x\to\ddot x$ is
\[
\alpha_S(x,\ddot x)
=
\frac{
\{\pi_u(\ddot x_0)/\bar q_T(\ddot x_T)\}q_S(x)
}{
\{\pi_u(\ddot x_0)/\bar q_T(\ddot x_T)\}q_S(x)
+
\{\pi_u(x_0)/\bar q_T(x_T)\}q_S(\ddot x)
}.
\]
An $\alpha_S(x,\ddot x)$-coin can be simulated using the divide-and-conquer Bernoulli factory (DCBF) of \citet{DCBF2025}, which exploits the product structure of $q_S$ to improve computational efficiency. Details about DCBF for the Ping-Pong are presented in Appendix \ref{ssec.:DCBF}. The general algorithm is described below.

\begin{algorithm}[H]
\caption{Ping-Pong MCMC}\label{alg:pingpong}
\KwData{Schedule rule; particles; estimator sample sizes; Bernoulli factory specifications; number of iterations $M$.}
\KwResult{MCMC sample from the marginal law of $X_0$}

Initialize a path $x^{(0)}$, for instance using SPARK\;
\For{$k=1,\ldots,M$}{
    Select a schedule and one of the three admissible updates:
        
       sample $X_{t_1},\ldots,X_T\mid X_0=x_0^{(k-1)}$ exactly from~\eqref{eq:forward}\;

       propose $X_0,\ldots,X_{t_{J-1}}\mid X_T=x_T^{(k-1)}$ using SPIDER and accept by Barker's rule\;

        propose $\ddot{x} \mid x_T^{(k-1)}$ using SPIDER and accept by Barker's rule\;

    If the schedule changes, sample the required path values from the corresponding OU bridges\;
}
\Return{$\{x_0^{(k)}\}_{k=1}^M$}
\end{algorithm}

The resulting chain is ergodic under standard domination and positivity conditions, although geometric convergence need not hold because the relevant importance ratios may be unbounded. This is not a serious limitation in the intended use of the method: the chain is initialized from a distribution close to the target and acts primarily as an exact correction of a high-quality path proposal, rather than as a generic global exploration mechanism. Practical efficiency is therefore governed by the quality of the SPIDER proposal, and hence by the terminal time, time schedule, particle numbers and estimator sample sizes. These quantities should be calibrated to keep the importance weights stable and the Barker's acceptance probabilities sufficiently high, while controlling the computational cost of the Bernoulli factory. Further discussions about the Ping-Pong algorithms are provided in Appendix \ref{ssec.:MCMCanls}.

Finally, to use Ping-Pong within a larger Gibbs sampler, one sets $p_0$ equal to the full conditional density of the block being updated. Conditional on the remaining variables, the reverse-diffusion construction defines the corresponding path law on $[0,T]$, and the Ping-Pong kernel targets this extended law while preserving the desired full conditional of $X_0$ as its marginal invariant distribution. Thus the resulting update is an exact Barker's-within-Gibbs block update.

\section{Numerical Examples} \label{sec.experiments}

We present two examples to illustrate and empirically explore the
properties of the proposed algorithms.

\paragraph{Bidimensional radial distribution}
Define a distribution on $\mathbb{R}^2$ by sampling a point in polar
coordinates $(\theta, z)$, where $\theta \sim \mathrm{U}[0, 2\pi]$ and
$z$ follows a mixture of normals with means $(3, 6, 9, 12)$, variance
$0.1^2$, and weights $(0.1, 0.4, 0.1, 0.4)$.  The resulting density is
\[
  p_0(x) \propto \frac{1}{\|x\|}
  \sum_{j=1}^{4} w_j \exp\!\left(-\frac{(\|x\| - 3j)^2}{2\cdot 0.1^2}\right).
\]

Both SPARK and Ping-Pong are run with the time schedule
$(0.1,\, 0.04,\, 0.02,\, 0.01,\, 0.005,\, 0.002,\, 0.0006,\, 0)$
and the standardized version of $p_0$.  For SPARK, only the
reverse-time transition block is used, with an estimator sample size of
$50$.  All pCN proposals use $\beta = 0.99$ and each chain runs for
$100$ iterations.  The SPIDER proposal in Ping-Pong uses $5$ particles
and an estimator sample size of $10$.  SPARK generates $5{,}000$
samples; Ping-Pong runs $15{,}000$ iterations and achieves an acceptance
rate of approximately $0.3$.

Both algorithms are run on an M2 MacBook Air using $7$ cores in
parallel.  SPARK completes $5{,}000$ independent samples in $15$
seconds; Ping-Pong completes $15{,}000$ iterations in $1.6$ minutes.

\begin{figure}[ht!]
\centering
\includegraphics[width=0.98\linewidth]{SPARK1.pdf}
\caption{Samples generated by SPARK at each time in the schedule.}
\label{fig:spark_radial}
\end{figure}

\begin{figure}[ht!]
\centering
\includegraphics[width=0.98\linewidth]{PP1.pdf}
\caption{Sample (left) and trace plot of the radius (right) generated by Ping-Pong.}
\label{fig:pp_radial}
\end{figure}

\paragraph{Ten-dimensional mixture of normals}
Consider an equal-weight mixture of three ten-dimensional normal
distributions with means $\mu_k = m_k \mathbf{1}_{10}$, where
$(m_1, m_2, m_3) = (-\sqrt{10},\, 0,\, \sqrt{10})$, and
covariance $0.3\, I_{10}$.

SPARK is implemented with the time schedule
$(1.7,\, 1.5,\, \ldots,\, 0.3,\, 0.15,\, 0.1,\, 0.07,\, 0.04,\, 0.02,
 \, 0.015,\, 0.01,\, 0)$
and the standardized version of $p_0$.  Both MC blocks --reverse-time
transition and marginals-- are used, with estimator sample sizes ranging
from $2{,}500$ at time $1.7$ to $250$ at time $0.01$.  Each PMMH chain
runs for $100$ iterations, and a final sample of size $5{,}000$ is
produced.  Total running time is $8.6$ minutes.

\begin{figure}[ht!]
\centering
\includegraphics[width=0.5\linewidth]{MixN.pdf}
\caption{Sample generated by SPARK projected onto the first two
  coordinates.  Empirical component weights are $0.340$, $0.326$, and
  $0.334$.  Red crosshairs indicate the true component means.}
\label{fig:spark_mixture}
\end{figure}

\section{Conclusion}

This paper introduces a new perspective on the use of reverse-time diffusions for sampling from unnormalized densities. The central novelty of the approach lies in two key features. First, the proposed methods avoid any time-discretization error, as they rely on exact representations of the underlying diffusion dynamics. Second, they are \emph{zeroth-order} methods, in the sense that they do not require the evaluation or estimation of derivatives of the log-density. This distinguishes them from a large class of existing diffusion-based samplers whose implementation depends on gradient information.

The methodology developed in this work builds on the theory of exact simulation for diffusion processes. In particular, the algorithms rely on a change-of-measure relationship between the transition law of a reverse-time diffusion and that of a Gaussian Markov process. A key ingredient of the approach is the availability of unbiased estimators for the marginal density of the diffusion evaluated at a collection of intermediate times. These estimators allow us to construct valid acceptance probabilities and importance weights without introducing bias, and play a central role in both classes of algorithms proposed in the paper.

Two main algorithmic frameworks are developed. The first is based on a sequence of MCMC algorithms with the last one converging to the original target under consideration. The second consists of a single MCMC chain targeting an augmented space that has the target as its marginal. While the latter is more precise in the sense of relying on the convergence of only one Markov chain, the former has broader application.

As is common with methods based on stochastic estimators and exact simulation techniques, the algorithms involve several tuning parameters whose specification can strongly influence performance. This paper provides practical guidelines for selecting these quantities, supported by theoretical analysis that clarifies the role of the main parameters controlling the variance of the estimators and the stability of the algorithms. These results help illuminate the trade-offs between computational cost and statistical efficiency.

The proposed methods also present some limitations that should be acknowledged. In particular, the algorithms may incur a non-negligible computational cost due to repeated evaluations of the target density and the use of unbiased estimation procedures, including Bernoulli factory constructions. These operations can be computationally intensive in certain settings. Nevertheless, it is important to emphasize that the resulting algorithms remain zeroth-order procedures that do not require gradients of the log-density. In many applications where derivative information is unavailable, unreliable, or expensive to compute, this property can represent a substantial practical advantage compared to state-of-the-art gradient-based diffusion samplers.

The empirical results presented for simulated examples illustrate the practical behavior of the proposed algorithms, highlighting their robustness and providing insight into the impact of the main tuning parameters.

More broadly, this work opens a new avenue in the development of reverse-time diffusion-based sampling methods. By combining ideas from exact diffusion simulation, unbiased estimation, and extended-space MCMC methodology, the framework proposed here provides a flexible foundation for future research. We expect that the techniques introduced in this paper may inspire further developments, including improved estimators, more efficient algorithmic implementations, and extensions to broader classes of stochastic processes and target distributions.

\bibliographystyle{apalike} 
\bibliography{biblio} 

\appendix

\renewcommand{\thetheorem}{\Alph{section}.\arabic{theorem}}
\setcounter{theorem}{0} 

\renewcommand{\theproposition}{\Alph{section}.\arabic{proposition}}
\setcounter{proposition}{0}

\section{Proofs} \label{ap:proofs}

\begin{proof}[Proof of Proposition \ref{prop_TV}]
Part $(i)$: the first inequality follows from Pinsker's inequality and the second one from Theorem 5.2.1 in \cite{bakry2014analysis} and the Gross logarithmic Sobolev inequality.

Part $(ii)$:
Notice that he process $(X_t)_{t \geq 0}$ has a closed-form solution,
\[ X_t = e^{-t/2}X_0 + e^{-t/2}\int_0^t e^{s/2}\,dB_s. \]
Then, $X_t\mid X_0=x_0 \sim \mathcal N(e^{-t/2}x_0,(1-e^{-t})I_d) $, and the result follows straightforwardly.

\end{proof}

\begin{proof}[Proof of Theorem \ref{th:girsanovgeral}]
Let $q(s,x)=\log p(T-s,x)$, $\bar f(s,x)=f(T-s,x)$, and $\bar g(s)=g(T-s)$.  
By the Fokker-Planck equation,
\[
\partial_s q
=
\nabla\!\cdot\bar f
+
\bar f\cdot\nabla q
-
\frac{\bar g^2}{2}\big(\Delta q+\|\nabla q\|^2\big)=0.
\]

Applying It\^o's formula to $q(s,Y_s)$ gives
\begin{align*}
q(s_1,Y_{s_1})-q(s_0,Y_{s_0})
&=
\int_{s_0}^{s_1}
\Big[
\partial_s q(s,Y_s)
-
\nabla q(s,Y_s)\!\cdot\!\bar f(s,Y_s)
+
\frac{\bar g(s)^2}{2}\Delta q(s,Y_s)
\Big]ds
+
\int_{s_0}^{s_1}\bar g(s)\nabla q(s,Y_s)\,dB_s .
\end{align*}

Substituting the Fokker-Planck identity yields
\begin{align*}
q(s_1,Y_{s_1})-q(s_0,Y_{s_0})
&=
\int_{s_0}^{s_1}\nabla\!\cdot\bar f(s,Y_s)\,ds
-
\frac12\int_{s_0}^{s_1}\bar g(s)^2\|\nabla q(s,Y_s)\|^2ds
+
\int_{s_0}^{s_1}\bar g(s)\nabla q(s,Y_s)\,dB_s .
\end{align*}

Exponentiating both sides gives
\[
\exp\!\left(
\int_{s_0}^{s_1}\bar g(s)\nabla q(s,Y_s)\,dB_s
-
\frac12\int_{s_0}^{s_1}\bar g(s)^2\|\nabla q(s,Y_s)\|^2ds
\right)
=
e^{-\int_{s_0}^{s_1}\nabla\!\cdot f(T-s,Y_s)ds}
\frac{p(T-s_1,Y_{s_1})}{p(T-s_0,\bar{x}_0)}.
\]

The result then follows from Girsanov's theorem.
\end{proof}

\begin{proof}[Proof of Corollary \ref{Corl1}]

Substituting $f(s,x)=-\tfrac{1}{2}x$ and $g(s)=1$ into the Radon-Nikodym expression of Theorem~\ref{th:girsanovgeral} yields the stated Gaussian transition and the identity
\[
\frac{d\bar P}{d\bar Q}(\omega)
= \exp\!\left\{\tfrac{d}{2}(s_1-s_0)\right\}
\frac{p(T-s_1,\omega_{s_1})}{p(T-s_0,\omega_{s_0})}.
\]

To establish parts (i) and (ii), consider the conditional measures 
$\tilde{\bar P}(\cdot)=\bar P(\cdot\mid\omega_{s_1})$ and 
$\tilde{\bar Q}(\cdot)=\bar Q(\cdot\mid\omega_{s_1})$.  
By the chain rule for Radon-Nikodym derivatives,
$\frac{d\bar P}{d\bar Q}(\omega) = \frac{d\bar P}{d\bar Q}(\omega_{s_1})
\frac{d\tilde{\bar P}}{d\tilde{\bar Q}}(\omega)$.

Since the expression for $\frac{d\bar P}{d\bar Q}(\omega)$ depends on $\omega$ only through $\omega_{s_1}$, it follows that $\ds \frac{d\tilde{\bar P}}{d\tilde{\bar Q}}(\omega)=1$,
which proves (i) and (ii).  

\end{proof}

\begin{proof}[Proof of Proposition \ref{prop:targt_mcmc}]

Define $\omega_{A}=\{\omega_t,\;t\in A\}$, we have
\[
\frac{d\bar P}{d\bar Q}(\omega_{[0:J]}) = \frac{p(T,\omega_0)}{q(\omega_0)}
\frac{d\bar P}{d\bar Q}(\omega_{(0:J)}) \propto \frac{p(T,\omega_0)}{q(\omega_0)}
\frac{p_0(\omega_T)}{p(T,\omega_0)} = \frac{p_0(\omega_T)}{q(\omega_0)}.
\]
\end{proof}

\begin{proof}[Proof of Proposition \ref{targino}]
Since $X_t \mid X_0=x_0 \sim \mathcal N(e^{-t/2}x_0,(1-e^{-t})I_d)$,
we have that
\begin{align*}
p(t,x)
&= \int_{\mathbb R^d} p_0(z)\,(2\pi(1-e^{-t}))^{-d/2}
\exp\!\Big(-\frac{\|x-e^{-t/2}z\|^2}{2(1-e^{-t})}\Big)\,dz = c(t)\,\mathbb E_{Z_{t,x}}\!\left[p_0(Z_{t,x})\right].
\end{align*}
\end{proof}

\begin{proof}[Proof of Proposition \ref{prop.GTEst}]
Let $v=e^t-1$ and recall
$
\ds \Var(g)=\int \pi(y)^2 \frac{p_x(y)^2}{q_t(y)}\,dy - r^2
$.
It suffices to show finiteness of the integral.

Since $\pi$ is bounded on $\{\|y\|\le R\}$, the integral over this region is finite. For $\|y\|>R$,
$\ds
\pi(y)^2 \le C^2 e^{-2a\|y\|^2}
$.
Moreover, for Gaussian densities,
$\ds
\frac{p_x(y)^2}{q_t(y)} \asymp \exp\!\left\{\left(-\frac{1}{v}+\frac{1}{2\tau^2}\right)\|y\|^2 + O(\|y\|)\right\}
$,
where $\tau^2=\frac{\sigma^2 v}{\sigma^2+v}$.

Hence, for large $\|y\|$,
\[
\pi(y)^2 \frac{p_x(y)^2}{q_t(y)}
\lesssim
\exp\!\left\{\left(-2a-\frac{1}{v}+\frac{1}{2\tau^2}\right)\|y\|^2 + O(\|y\|)\right\}.
\]
Thus integrability holds if
$\ds
-2a-\frac{1}{v}+\frac{1}{2\tau^2}<0
\;\Longleftrightarrow\;
\frac{1}{2\tau^2}<2a+\frac{1}{v}
$.
Substituting $\tau^2=\frac{\sigma^2 v}{\sigma^2+v}$ gives
$\ds 
\frac{\sigma^2+v}{2\sigma^2 v}<2a+\frac{1}{v}
\;\Longleftrightarrow\;
\sigma^2>\frac{v}{1+4av}
$.
This ensures finiteness of the second moment, hence $\Var(g)<\infty$.

\end{proof}

\begin{proof}[Proof of Proposition \ref{prop.stTEst}]
Let
$\ds
g(Y)=\frac{\pi(Y)\tilde q_t(Y)}{q_t(Y)}$, $Y\sim q_t$.
Then $\E[g(Y)]=r$ and
$\ds
\E[g(Y)^2]
=
\int \frac{\pi(y)^2 \tilde q_t(y)^2}{q_t(y)}\,dy
$.

Using $q^\star(y)=\frac{\pi(y)\tilde q_t(y)}{r}$,
$\ds
\E[g(Y)^2]
=
r^2 \int \frac{q^\star(y)^2}{q_t(y)}\,dy
$.
Hence
$\ds
\Var(g(Y))
=
r^2\left(\int \frac{q^\star(y)^2}{q_t(y)}\,dy - 1\right)
$.

Finally,
$\ds
\chi^2(q^\star\|q_t)
=
\E_{q_t}\!\left[\left(\frac{q^\star}{q_t}-1\right)^2\right]
=
\int \frac{q^\star(y)^2}{q_t(y)}\,dy -1
$,
which gives
$\ds
\Var(g(Y))=r^2\,\chi^2(q^\star\|q_t)
$.

\end{proof}


\section{Bernoulli Factory for the Ping-Pong Algorithms}\label{ssec.:DCBF}

The Metropolis-type updates within the \emph{Ping-Pong} class require the simulation of Bernoulli random variables with acceptance probabilities $\alpha_S$ or $\alpha_{T,S}$, depending on the update used. 
These probabilities are not available in closed form, but can be expressed in terms of quantities for which unbiased estimators can be simulated. 
Consequently, acceptance is implemented through a \emph{Bernoulli factory}.

\paragraph{Bernoulli factory formulation.}

Let $p=(p_0,p_1)$ denote unknown probabilities for which we can simulate independent $p_i$-coins, $i=0,1$. 
A Bernoulli factory problem consists of simulating an $f(p)$-coin, that is, a Bernoulli random variable with success probability $f(p)\in(0,1)$, using only $p_i$-coins.

In our setting, the Barker acceptance probability has the form
\begin{equation}\label{eq.:Bkrap2}
f(p)=\frac{c_1 p_1}{c_0 p_0 + c_1 p_1},
\end{equation}
where $c_0,c_1>0$ are known constants and $p_0,p_1$ are unknown probabilities.
This structure admits the so-called \emph{2-coin algorithm}, originally proposed in \cite{gonccalves2023exact} and later refined in \cite{vats2022efficient,DCBF2025}.

\paragraph{Simulation of $p_i$-coins.}
In our setting, we do not observe $p_i$ directly, but can simulate unbiased estimators $\hat p_i$ satisfying $\mathbb{E}[\hat p_i]=p_i$ and $\hat p_i\in[0,1]$ almost surely. 
A $p_i$-coin can then be simulated exactly by:
\begin{enumerate}[nosep]
\item Simulating $\hat p_i$;
\item Given $\hat p_i$, drawing a Bernoulli$(\hat p_i)$ variable.
\end{enumerate}
By iterated expectation, the resulting coin has success probability $p_i$.

\paragraph{The 2-coin algorithm.}
The 2-coin algorithm exploits representation \eqref{eq.:Bkrap2}. 
At each iteration, one first selects index $i\in\{0,1\}$ with probability $c_i/(c_0+c_1)$, then simulates a $p_i$-coin.
If the $p_i$-coin is successful, the algorithm outputs $i$; otherwise it restarts.
Algorithm~\ref{alg:2coin} presents the procedure for an $f(p)$-coin.

\begin{algorithm}[H]
\caption{2-coin algorithm}\label{alg:2coin}
\KwData{Constants $c_0,c_1>0$; mechanism to simulate $p_i$-coins}
\KwResult{An $f(p)$-coin}
\While{no output}{
Draw $I \sim \text{Ber}\!\left\{c_1/(c_0+c_1)\right\}$\;
Simulate a $p_I$-coin\;
\eIf{$p_I$-coin $=1$}{
Output $I$\;
}{
continue\;
}}
\end{algorithm}

\cite{vats2022efficient} propose a generalization of Barker’s 2-coin algorithm in which a $\beta$-coin is simulated at the beginning of each iteration of the 2-coin procedure. 
If this $\beta$-coin returns 1, the algorithm immediately outputs 0; otherwise, the usual 2-coin step is performed. 
This modification effectively truncates the number of loops of the 2-coin algorithm, thereby preventing occasionally very costly realizations that may arise when the underlying coin probabilities are small.

The choice of $\beta$ reflects a trade-off between computational cost and mixing efficiency. 
Larger values of $\beta$ reduce the expected cost per invocation of the 2-coin algorithm by increasing the probability of early termination. 
However, this comes at the price of additional rejections, which slow down the mixing of the Markov chain. 
Thus, $\beta$ should be calibrated to balance computational savings against the induced degradation in mixing performance.

\paragraph{Product structure and divide-and-conquer.}

In both Barker's updates (Update~2 and Update~3), the acceptance probabilities admit a product representation of the form
\[
c_i p_i
=
\prod_{j=1}^{J^\star}
 c_i^{1/J^\star} p_{ij},
\]
where $J^\star = J$ for Update~2 and $J^\star = J+1$ for Update~3.
A $p_i$-coin can therefore be simulated by generating independent $p_{ij}$-coins and returning $1$ if and only if all sub-coins equal $1$.

Directly applying the 2-coin algorithm to simulate an $f(p)$-coin when 
$p_i=\prod_{j=1}^J p_{ij}$ leads to an exponential cost in $J$ 
(see \cite{DCBF2025}).
The divide-and-conquer Bernoulli factory (DCBF) of \cite{DCBF2025} 
reduces this cost by organizing the $J$ factors along a balanced binary tree. 
The index set $\{1,\ldots,J\}$ is partitioned across the leaves of the tree, 
and at each leaf $l$ the 2-coin algorithm is used to simulate an 
$f_l(p)$-coin depending only on the corresponding subset of factors. 
More precisely, if $l \subset \{1,\ldots,J\}$ denotes the subset of indices 
assigned to leaf $l$, then
\[
f_l(p)
=
\frac{\prod_{j\in l} c_{1}^{1/J} p_{1j}}
 {\prod_{j\in l} c_{1}^{1/J} p_{1j}
  + 
  \prod_{j\in l} c_{0}^{1/J} p_{0j}}.
\]

After simulating the $f_l(p)$-coin at each leaf, the outputs are propagated 
upward through the tree. At each internal node, the two children are 
compared: if they agree (both $0$ or both $1$), their common value is 
assigned to the parent node; if they differ, the parent node remains 
unresolved. 

Once this upward pass is completed, some internal nodes may remain 
unresolved due to disagreement between their children. The leaves 
connected to such unresolved nodes identify precisely the subsets of 
indices that must be re-simulated, while all other leaves retain their 
previously simulated values.

If the root node becomes resolved, its value is returned as the 
$f_l(p)$-coin. Otherwise, only the leaves associated with unresolved 
branches are re-simulated, and the procedure repeats. 
This recursive propagation and localized re-simulation prevents 
repeated evaluation of all $J$ factors at each restart and is the key 
mechanism reducing the overall computational complexity from exponential 
to polynomial in $J$ under the regimes studied in \cite{DCBF2025}.
Although developed in a Bayesian asymptotic setting, the same structure substantially reduces computational cost in the present context.

Algorithm~\ref{alg:DCBF} outlines the DCBF construction.

\begin{algorithm}[H]
\caption{Divide-and-Conquer Bernoulli Factory (DCBF)}\label{alg:DCBF}
\KwData{Constants $c_i$; number of levels $\ell$; mechanism to simulate $f_l(p)$-coins via the 2-coin algorithm}
\KwResult{An $f(p)$-coin}
Construct a balanced binary tree with $2^{\ell}$ leaves\;
Partition $\{1,\ldots,J\}$ across the leaves as evenly as possible, defining the subsets $l$\;

Mark all leaves as active\;

\While{the root node is unresolved}{
Simulate an $f_l(p)$-coin at each active leaf $l$\;

Propagate the leaf outputs upward through the tree as described in the text\;

\eIf{the root node becomes resolved}{
Output its value as the $f(p)$-coin\;
}{
Mark as active only the leaves belonging to unresolved branches\;
}
}
\end{algorithm}

\paragraph{DCBF for Update~2.}

For Update~2, we have $j=0,\ldots,J-1$ and
\begin{equation}\label{coins_prob1}
c_{0j}=(\pi_u(x_0))^{1/J}, 
\qquad
\hat p_{0j}
=
\frac{w_j(x_{t_j},Z_{j,n_j})}
 {w_j(x_{t_j},Z_{j,n_j})+\sum_{i=1}^{n_j-1}w_j(\dot X_{j,i},Z_{j,i})}.
\end{equation}
The expressions for $c_{1j}$ and $\hat p_{1j}$ are obtained by replacing $x$ with $\ddot x$.

\paragraph{DCBF for Update~3.}

For Update~3, we set $j=0,\ldots,J$ and
\begin{equation}\label{coins_prob2}
c_{0j}=\left(\frac{\pi_u(x_0)}{\tilde q_T(x_T)}\right)^{1/(J+1)}.
\end{equation}
For $j=0,\ldots,J-1$, the estimators $\hat p_{0j}$ are defined exactly as in Update~2.

\section{Analysis of the Ping-Pong algorithms}\label{ssec.:MCMCanls}

This section analyzes the theoretical properties of the \emph{Ping-Pong} class of MCMC algorithms and discusses how these properties guide practical implementation choices. Some factors influencing efficiency are shared with SPARK, such as the terminal time $T$, the time schedule and the Monte Carlo sample size used in the density estimators. Others are specific to the MCMC context, including the number of particles in the SPIDER proposal, the choice of update combination within the Ping-Pong class and the specifications of the DCBF algorithm.

\subsubsection*{Proposal distribution}

The densities of the proposal distributions within the \emph{Ping-Pong} class are intractable because the distribution of the SPIDER output is itself intractable. Nevertheless, unbiased estimators can be constructed, based on the following result.

\begin{proposition}\label{prop.ueSP_IS}
Let $\tilde Q$ denote the SPIDER law associated with the IS estimator
\eqref{eq:unb_est2}, with $q_t=q_t^G$, completed with bridges from $\bar Q$, and assume that $X_T$ has Lebesgue density $\bar q_T$ under $\bar Q$. Then
\[
q_S(x):=\frac{d\tilde Q}{d\bar Q}(x)
=
q_J(x_T)\prod_{j=J-1}^0 q_j(x_{t_j}\mid x_{t_{j+1}}),
\]
where, for $j=0,\ldots,J-1$,
\begin{equation}\label{eq.qjSP_IS}
q_j(x_{t_j}\mid x_{t_{j+1}})
=
n_j\,
\mathbb E_{\dot X_j,Z_j}
\left[
\frac{
w_j(x_{t_j},Z_{j,n_j})
}{
w_j(x_{t_j},Z_{j,n_j})
+
\sum_{i=1}^{n_j-1} w_j(\dot X_{j,i},Z_{j,i})
}
\right],
\end{equation}
with
\[
w_0(x)=\pi_u(x),
\quad
w_j(x,z)=\sum_{l=1}^{m_j}
\pi_u(z_l)\frac{\tilde q_{t_j}(z_l)}{q_{t_j}(z_l)},
\;\;1\le j\le J-1,\;\; 
w_J(x,z)=\frac{1}{\bar q_T(x_T)}\sum_{l=1}^{m_J}
\pi_u(z_l)\frac{\tilde q_T(z_l)}{q_T(z_l)},
\]
and
\[
\dot X_{j,i}\overset{i.i.d.}{\sim}
\mathcal N\!\left(
e^{\Delta t_j/2}X_{t_{j+1}},\,
(e^{\Delta t_j}-1)I_d
\right),
\quad
\dot X_{J,i}\overset{i.i.d.}{\sim}\bar q_T,
\quad
Z_{j,i,l}\overset{i.i.d.}{\sim}q_{t_j},
\qquad
j=1,\ldots,J.
\]
\end{proposition}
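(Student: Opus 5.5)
The plan is to compute the SPIDER output law one SIR step at a time, and then assemble the steps using the Markov structure of $\bar Q$.

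\textbf{Step 1: density of a single SIR step.} Fix $j\le J-1$ and a conditioning value $x_{t_{j+1}}$. In Algorithm~\ref{alg:sir}, the $n_j$ candidates are drawn i.i.d.\ from the reference kernel $\bar q_j(\cdot\mid x_{t_{j+1}})=\mathcal N(e^{\Delta t_j/2}x_{t_{j+1}},(e^{\Delta t_j}-1)I_d)$. Each candidate receives an independent block of auxiliary variables $Z_{j,i}$, and one candidate is selected with probability proportional to its weight.

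By exchangeability, the probability that the output lies in a set $A$ equals $n_j$ times the probability that it lies in $A$ and the selected index is $n_j$. Conditioning on the $n_j$-th candidate value $x$ gives
\[
P(X_{t_j}\in A)=\int_A \bar q_j(x\mid x_{t_{j+1}})\, n_j\,\mathbb E\!\left[\frac{w_j(x,Z_{j,n_j})}{w_j(x,Z_{j,n_j})+\sum_{i<n_j}w_j(\dot X_{j,i},Z_{j,i})}\right]dx .
\]
So the output has density $q_j(x\mid x_{t_{j+1}})$ from \eqref{eq.qjSP_IS} with respect to the reference transition of $\bar Q$.

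Here $w_j$ is exactly the unnormalised estimator $\hat p_{\mathrm{IS}}(t_j,\cdot)$ from \eqref{eq:unb_est2} with the constants dropped. Dropping constants is harmless because the selection probabilities are invariant to common scaling. At $j=0$ the weight is $\pi_u$ itself, since Algorithm~\ref{alg:sir} uses the exact density there.

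I will also note that the $w_j$ as written does not depend on its first argument $x$. Interpreted literally, the estimator would be evaluated at $q_{t_j}$, which depends on $x$. The proof should read $q_{t_j}=q^G_{t_j}(\cdot;x)$ and $\tilde q_{t_j}=\tilde q_{t_j}(\cdot;x)$, with $Z_{j,i,l}$ drawn given the candidate. The derivation is unchanged under this reading, by conditioning on the candidate first.

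\textbf{Step 2: the terminal step.} At $j=J$ the candidates come from $\bar q_T$, with weights $\hat p_{\mathrm{IS}}(T,\cdot)/\bar q_T(\cdot)$, which is $w_J$. The same exchangeability argument gives Lebesgue density $\bar q_T(x_T)q_J(x_T)$. Hence $q_J(x_T)$ is the Radon--Nikodym derivative of the law of $X_T$ with respect to its $\bar Q$-marginal.

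\textbf{Step 3: assembling the path.} SPIDER draws $x_T$ first and then each $x_{t_j}$ given $x_{t_{j+1}}$, using fresh randomness at every stage. The joint law of the skeleton is therefore
\[
\bar q_T(x_T)q_J(x_T)\prod_{j=J-1}^{0}\bar q_j(x_{t_j}\mid x_{t_{j+1}})\,q_j(x_{t_j}\mid x_{t_{j+1}}).
\]
Under $\bar Q$, which is Markov with initial marginal $\bar q_T$ at reverse time $0$ and Gaussian transitions $\bar q_j$ (Corollary~\ref{Corl1}), the skeleton has density $\bar q_T\prod_j\bar q_j$. Dividing the two gives the stated product $q_S$ on the skeleton.

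$\tilde Q$ is completed with $\bar Q$-bridges between grid points, so the conditional laws given the skeleton coincide. By the chain rule for Radon--Nikodym derivatives, as in the proof of Corollary~\ref{Corl1}, the path-level derivative equals the skeleton one.

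\textbf{Main obstacle.} The delicate point is Step 1: carefully justifying the factor $n_j$ via exchangeability, and handling the joint auxiliary randomness $(\dot X_j,Z_j)$ with Fubini. Positivity of the weights ensures the ratios are well defined and lie in $[0,1]$, which gives integrability.
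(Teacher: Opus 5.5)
Your proof is correct and follows the paper's argument. The paper likewise identifies each SIR step's density with respect to the reference kernel $\bar Q_j$ by reducing to index $n_j$ via exchangeability and then conditioning on $\dot X_{j,n_j}=x$, after which it simply asserts that this per-step identification suffices. Your explicit assembly of the skeleton via the Markov property of $\bar Q$, the bridge/chain-rule argument for the full path, and your observation that $Z_{j,n_j}$ must be drawn from $q^G_{t_j}(\cdot\,;x)$ given the candidate are useful details the paper omits (its notation $Z_{j,i,l}\sim q_{t_j}$ leaves that dependence implicit).
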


\begin{proof}[Proof of Proposition \ref{prop.ueSP_IS}]
Under SPIDER, the path is generated sequentially backward. It suffices to
identify, for each $j$, the conditional density of $X_{t_j}$ given
$X_{t_{j+1}}$ with respect to the proposal kernel
\[
\bar Q_j(x_{t_{j+1}},dx_{t_j})
=
\mathcal N\!\left(
e^{\Delta t_j/2}x_{t_{j+1}},
(e^{\Delta t_j}-1)I_d
\right)(dx_{t_j}),
\quad
\text{for } j<J, \text{ and } \bar q_T \text{ for } j=J.
\]

Conditionally on $X_{t_{j+1}}=x_{t_{j+1}}$, the algorithm draws
$\ds
\dot X_{j,1},\ldots,\dot X_{j,n_j}
\overset{i.i.d.}{\sim}
\bar Q_j(x_{t_{j+1}},\cdot)$,
together with independent auxiliary variables $Z_{j,1},\ldots,Z_{j,n_j}$ with
$\ds
Z_{j,i,l}\overset{i.i.d.}{\sim}q_{t_j},
$
and selects an index $I_j$ with probability
\[
\mathbb P(I_j=k\mid \dot X_j,Z_j)
=
\frac{w_j(\dot X_{j,k},Z_{j,k})}
{\sum_{i=1}^{n_j}w_j(\dot X_{j,i},Z_{j,i})},
\]
where $w_j$ includes the IS correction
$\tilde q_{t_j}/q_{t_j}$.
Thus $X_{t_j}=\dot X_{j,I_j}$. For any measurable $A$,
\begin{align*}
\tilde{Q}(X_{t_j}\in A\mid X_{t_{j+1}})
&=
\sum_{k=1}^{n_j}
\mathbb E\!\left[
\mathbf 1_{\{\dot X_{j,k}\in A\}}
\frac{w_j(\dot X_{j,k},Z_{j,k})}
{\sum_{i=1}^{n_j}w_j(\dot X_{j,i},Z_{j,i})}
\right]\\
&=
n_j\,
\mathbb E\!\left[
\mathbf 1_{\{\dot X_{j,n_j}\in A\}}
\frac{w_j(\dot X_{j,n_j},Z_{j,n_j})}
{w_j(\dot X_{j,n_j},Z_{j,n_j})+\sum_{i=1}^{n_j-1}w_j(\dot X_{j,i},Z_{j,i})}
\right]\\
&=
n_j\,
\mathbb E\!\left[
\int_A
\frac{w_j(x,Z_{j,n_j})}
{w_j(x,Z_{j,n_j})+\sum_{i=1}^{n_j-1}w_j(\dot X_{j,i},Z_{j,i})}
\,\bar Q_j(x_{t_{j+1}},dx)
\right]\\
&=
\int_A
n_j\,
\mathbb E_{\dot X_j,Z_j}\!\left[
\frac{w_j(x,Z_{j,n_j})}
{w_j(x,Z_{j,n_j})+\sum_{i=1}^{n_j-1}w_j(\dot X_{j,i},Z_{j,i})}
\right]
\bar Q_j(x_{t_{j+1}},dx).
\end{align*}

This identifies $q_j(\cdot\mid x_{t_{j+1}})$ as the Radon--Nikodym derivative
in \eqref{eq.qjSP_IS}.
\end{proof}

\subsubsection*{Convergence and ergodicity}

For independence Metropolis-Hastings algorithms with proposal density $q$
and invariant distribution $\pi$, the associated Markov kernel $P$ satisfies:

\begin{enumerate}
    \item
    \[
    \lim_{n\to\infty}
    \left\|
    P^n(x,\cdot) - \pi
    \right\|_{\mathrm{TV}}
    = 0,
    \qquad \text{for $\pi$-a.e. } x\in\mathcal{X},
    \]
    
    \item
    \[
    \frac{1}{n}\sum_{i=1}^n h(X_i)
    \;\longrightarrow\;
    \pi(h),
    \qquad \text{$P$-a.s.,}
    \]
\end{enumerate}
for any measurable $h$ with $\pi(|h|)<\infty$,  provided \citep{DJKW2025}:

\begin{enumerate}[label=(\roman*)]
    \item $\pi \ll q$;
    \item $w=\pi/q$ satisfies $w>0$ $q$-a.s.\ and $q(w)=1$.
\end{enumerate}

These conditions apply directly to the pure independence formulation (Update 3), and analogous arguments extend to the two-block Gibbs-type formulation (Updates 1 and 2), since each Metropolis-type update satisfies the same domination and positivity conditions.

Furthermore, \cite{MengTwee1996} show that
geometric ergodicity holds if and only if the importance weight $w$ is essentially bounded.
Bounded weights imply a global minorization condition and hence uniform (and therefore geometric) ergodicity.
When $w$ is unbounded, geometric convergence in total variation fails.
An analogous statement holds for the independence Barker's kernel.

In our setting, $\pi_u$ is bounded, so the auxiliary weights $w_j(\cdot,\cdot)$ are uniformly bounded above.
This does not imply that the global importance ratios arising in Update~2 or Update~3 are bounded.
A single small factor in $q_S$ can make the product arbitrarily small.
Accordingly, we do not assume bounded importance ratios and therefore do not rely on uniform or geometric ergodicity.

Nevertheless, this is not problematic.
Uniform ergodicity requires worst-case control over all starting points.
In practice, the Ping-Pong algorithms are initialized from a distribution $\nu$ designed to approximate $\pi$.
Hence it is more relevant to study
\[
\|\nu P^t - \pi\|_{\mathrm{TV}}
\]
rather than worst-case bounds.

Moreover, the SPIDER proposal approximates the full diffusion path law.
The MCMC acts primarily as a correctness mechanism ensuring exactness, rather than as a global explorer of the state space.
In this regime, the acceptance rate becomes a structurally meaningful diagnostic:
sustained high acceptance reflects that the proposal already lies close to stationarity and that the MCMC performs only small perturbative corrections.
This interpretation differs from generic independence samplers, where high acceptance may arise from accidental tail matching.

A natural initialization $\nu$ is given by a SPARK realization.
This places the chain in a high-probability region of the target, aligning theoretical considerations with practical operation.

\subsubsection*{Terminal time $T$ and proposal $\boldsymbol{\tilde{q}}_T$}

The choice of $T$ plays a distinct role under the two canonical strategies.

Under the Updates~1 and~2 Gibbs sampling formulation, mixing is affected by inter-block dependence, specifically by the correlation between $X_0$ and $X_T$ under the target. As $T$ decreases, this dependence strengthens, reducing Gibbs efficiency.

Under Update~3 alone formulation, this block-correlation effect is absent. The choice of $T$ then follows the same rationale as in SPARK: $T$ must be large enough that the SIR proposal for $X_T$ produces reasonable candidates and hence satisfactory acceptance rates.

Because the second formulation avoids block correlation, it may permit smaller values of $T$, potentially improving overall efficiency.

Regarding the choice of the particle distribution $\bar{q}_T$ in the SIR step for $X_T$ (Update 3), a preliminary learning phase is generally beneficial. One practical approach is to run a PMMH algorithm targeting $X_T$ (as in SPARK) in order to obtain empirical information about salient features of $p(T,\cdot)$, such as location, scale, and potential residual dependence or tail behavior. Provided that $T$ is large enough to have mitigated strong multimodality and dependence, a multivariate Student-$t$ distribution with independent coordinates typically yields a robust and efficient SIR proposal. The heavier tails offer additional protection against moderate misspecification.

Under this perspective, the terminal time $T$ for Update~3 can be selected following the same calibration strategy proposed for SPARK. Specifically, $T$ should be large enough so that the marginal law $p(T,\cdot)$ exhibits sufficiently regular structure, with attenuated multimodality and weakened dependence, allowing a simple parametric proposal such as the Student-$t$ to approximate it effectively. At the same time, $T$ should not be taken unnecessarily large, since this increases computational cost without substantial gains in proposal quality. Thus, the choice of $T$ reflects a balance between structural simplification of the target at time $T$ and overall computational efficiency.

\subsubsection*{Time schedule and number of particles}

Given a fixed terminal time $T$, the intermediate time schedule should be chosen with care.
Although the MCMC correction guarantees exactness, and therefore compensates for the approximation error introduced by the self-normalized sequential SIR mechanism, the quality of the time discretization remains crucial for efficiency. 
The schedule determines how smoothly the reverse-diffusion bridge connects the target at time $0$ to the reference distribution at time $T$. 
If the intermediate times are poorly spaced, successive importance weights become more variable, degrading the quality of the SPIDER proposal and ultimately lowering acceptance probabilities in the MCMC correction step.

For this reason, the time schedule should be chosen so as to control the incremental inefficiency at each level.
In particular, adjacent time points should be sufficiently close to prevent large variability in the incremental importance ratios, thereby maintaining stable particle weights and a well-aligned global proposal.

An efficient way to measure the efficiency of each SIR algorithm and make it uniformly reasonable across all the time points is to evaluate the effective sample size of the particle approximation, commonly defined as $\ds \left(\sum_{k=1}^nw_{i}^2\right)^{-1}$, where $w_i$ are the exact self-normalized weights of the generate sample. In our case, these can be estimated using the respective estimator of the IS weights.

The additional robustness provided by the MCMC correction is more effectively exploited to reduce the number of particles, rather than to relax the time schedule. 
This is because the computational cost of the adopted Bernoulli factory is highly sensitive to the particle specification. 
Indeed, the probability of the second coin in \eqref{coins_prob1} is exactly the normalized importance weight of the selected particle, computed after a fresh draw of the remaining $n_j-1$ particles and of all auxiliary random variables used in the unbiased estimators of the $n_j$ weights.
A typical normalized weight is of order $\mathcal{O}(1/n_j)$, so increasing the number of particles $n_j$ decreases this probability proportionally. 
This, in turn, increases the expected cost of the two-coin procedure executed at the leaves of the DCBF tree \citep{DCBF2025}.

Ultimately, the practitioner must balance the trade-off between maintaining stable incremental importance weights (via an adequate time schedule) and controlling the computational burden of the DCBF (via the particle numbers).

\subsubsection*{DCBF specifications}

The DCBF algorithm involves two key tuning parameters: the escape probability $\beta$ in Portkey’s generalized 2-coin procedure, and the number of leaves $\ell$ in the binary tree.

The escape probability $\beta$ controls the frequency of early termination in the 2-coin routine. 
A conservative choice corresponds to selecting $\beta$ such that only a small proportion of iterations are terminated due to the $\beta$-coin, typically on the order of 1-2\%.
This limits the inflation of rejection probability while still preventing excessively costly realizations of the 2-coin algorithm. 
In practice, $\beta$ should be calibrated to achieve a substantial reduction in computational cost without materially degrading the mixing of the Markov chain.

The second specification concerns the number of leaves $\ell$ in the DCBF tree.  
The intractable probabilities $p_0$ and $p_1$ in the Barker's acceptance probability involve $J^\star$ product terms, where $J^\star = J$ for Update~2 and $J^\star = J+1$ for Update~3.
For computational efficiency, $J^\star$ should be chosen as a power of two, say $J^\star=2^K$, so that the DCBF tree is balanced with $\ell=2^{K-1}$ leaves, assigning two product terms to each leaf.
In all examples reported in Section~\ref{sec.experiments}, this specification was empirically optimal.

\section{Why rejection sampling is impractical} \label{ap:EA}

In principle, one could simulate exactly from the transition distribution of the reverse diffusion in \eqref{eq:reverse} using rejection sampling, since the Radon-Nikodym derivative $\frac{d\bar{P}}{d\bar{Q}}$ is uniformly bounded. In practice, however, this approach is infeasible because only a very loose upper bound on this derivative is available. 

While rejection sampling is efficient when the optimal bound $M_{\mathrm{opt}}$ (the supremum of the derivative) is used, employing a loose bound $M$ increases the expected number of proposals until acceptance by a factor $M/M_{\mathrm{opt}}$. In our setting, the acceptance probability deteriorates with the time $t$, decaying at least exponentially fast in $t$ and therefore vanishing as $t\to\infty$ (see Proposition~\ref{prop:RSacc}).

Specifically, assume without loss of generality that $\sup_x \pi_u(x) \le 1$. The rejection sampler for $X_t\mid X_{t+\Delta}=x$ proceeds as follows:
\begin{enumerate}
    \item Propose $y\sim \mathcal{N}(e^{(\Delta)/2} x , (e^{\Delta} - 1) I_d )$,
    \item Draw $z\sim\mathcal{N}(e^{t/2}y,(e^t-1)I_d)$ and compute $\alpha(y,z) = \pi_u(z)$,
    \item Draw $I\sim \text{Ber}(\alpha(y,z))$,
    \item If $I=1$, return $\bar{X}_t=y$, otherwise return to Step~1.
\end{enumerate}

\begin{proposition}\label{prop:RSacc}
The global acceptance probability of the rejection sampler is
\[
\mathbb{P}(\mathrm{accept})=
\mathbb{E}_{y,z}[\alpha(y,z)],
\]
where we assume without loss of generality that $\pi_u \le 1$. Moreover, if $\pi_u \in L^1(\mathbb{R}^d)$, then
\[
\mathbb{E}_{y,z}[\alpha(y,z)]=
\mathcal{O}(e^{-td/2}),
\qquad t \to \infty,
\]
and in particular $\mathbb{E}[\pi_u(W)] \to 0$ as $t \to \infty$.
\end{proposition}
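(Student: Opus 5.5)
The plan is to reduce everything to a single Gaussian expectation, by computing the marginal law of the auxiliary draw $z$, and then to bound that expectation with a sup-norm estimate on a Gaussian density.

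\emph{First claim.} Each pass through Steps 1--4 is independent. Conditionally on $(y,z)$, acceptance occurs with probability $\alpha(y,z)=\pi_u(z)\in[0,1]$. The tower property then gives the per-proposal acceptance probability
\[
\mathbb{P}(\mathrm{accept})=\mathbb{E}_{y,z}[\alpha(y,z)].
\]
It follows that the number of proposals is geometric with this success parameter. I would also note why the scheme is correct. Integrating out $z$, the accepted $y$ has density proportional to $\mathcal{N}(e^{\Delta/2}x,(e^\Delta-1)I_d)(y)\,\mathbb{E}_z[\pi_u(z)]$. By Proposition~\ref{targino}, this is proportional to the Gaussian density times $p(t,y)$, which by Corollary~\ref{Corl1} is the reverse transition.

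\emph{Marginal of $z$.} Let $W$ denote $z$ after integrating out $y$. Write $y=e^{\Delta/2}x+(e^\Delta-1)^{1/2}\xi_1$ and $z=e^{t/2}y+(e^t-1)^{1/2}\xi_2$, with $\xi_1,\xi_2$ independent standard Gaussians. Then $W$ is Gaussian with
\[
\text{mean } e^{(t+\Delta)/2}x,\qquad \text{covariance } \big(e^t(e^\Delta-1)+e^t-1\big)I_d=(e^{t+\Delta}-1)I_d .
\]
Hence $\mathbb{E}_{y,z}[\alpha(y,z)]=\mathbb{E}[\pi_u(W)]$.

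\emph{Bound.} For $\pi_u\in L^1$, bound the Gaussian density of $W$ by its maximum:
\[
\mathbb{E}[\pi_u(W)]=\int \pi_u(w)\,\phi_W(w)\,dw\le (2\pi(e^{t+\Delta}-1))^{-d/2}\,\|\pi_u\|_{L^1}.
\]
This is $\mathcal{O}(e^{-(t+\Delta)d/2})$, hence $\mathcal{O}(e^{-td/2})$, uniformly in $x$ and for any fixed $\Delta\ge 0$. In particular $\mathbb{E}[\pi_u(W)]\to 0$ as $t\to\infty$.

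The only step needing care is the variance composition that makes $W$ Gaussian with covariance $(e^{t+\Delta}-1)I_d$. This is a routine Gaussian convolution, so I expect no real obstacle. The key point is that the sup-norm bound uses integrability of $\pi_u$ rather than the bound $\pi_u\le 1$.
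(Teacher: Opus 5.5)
Your proof is correct and rests on the same estimate as the paper's: bound the Gaussian density by its supremum and use $\pi_u\in L^1$, so that integrability, not $\pi_u\le 1$, drives the decay. The only difference is the order of integration: the paper bounds $\mathbb{E}_{z\mid y}[\pi_u(z)]\le C(e^t-1)^{-d/2}$ uniformly in $y$ and then averages over $y$, whereas you first marginalize $y$ to get $W\sim\mathcal{N}(e^{(t+\Delta)/2}x,(e^{t+\Delta}-1)I_d)$, which gives the slightly sharper constant $(2\pi(e^{t+\Delta}-1))^{-d/2}\|\pi_u\|_{L^1}$ and also defines the variable $W$ that the statement leaves undefined.
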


\begin{proof}
Iterated expectation gives $\mathbb{P}(\mathrm{accept})=
\mathbb{E}_{y,z}[\alpha(y,z)]=
\mathbb{E}_y\left[\mathbb{E}_{z|y}[\pi_u(z)]\right]$.

Let $u=z-e^{t/2}y$. Then
\[
\mathbb{E}_{z|y}[\pi_u(z)]=\int_{\mathbb{R}^d} \pi_u(z)\tilde q_t(z|y),dz
= (2\pi (e^t-1))^{-d/2}
\int_{\mathbb{R}^d} \pi_u(u+e^{t/2}y)
\exp\left(-\frac{|u|^2}{2(e^t-1)}\right)du.
\]

By translation invariance and because $\pi_u\in L^1(\mathbb{R}^d)$,
$\ds
\int_{\mathbb{R}^d} \pi_u(u+e^{t/2}y)du
= \int_{\mathbb{R}^d} \pi_u(v)dv < \infty$.

Thus,
\[
\mathbb{E}_{z|y}[\pi_u(z)]
\le C (e^t-1)^{-d/2},
\]
with $C$ independent of $y$. Therefore,
$\ds
\mathbb{E}_{y,z}[\alpha(y,z)]
\le C (e^t-1)^{-d/2}
= \mathcal{O}(e^{-td/2}),
\;t\to\infty.
$
\end{proof}

\end{document}